\newcommand{\faultbis}[1]{
\draw[<-, color=red] (#1) -- ($(#1)+(0.2,1.2)$) -- ($(#1)+(0.1,1.4)$) --  ($(#1)+(0.2,2)$) node[above, left] {\scriptsize{Error}};
} 
\newcommand{\detecfault}[1]{
\draw[<-, color=blue] (#1) -- ($(#1)+(0.2,1.2)$) -- ($(#1)+(0.1,1.4)$) --  ($(#1)+(0.2,2)$)  node[above, right] {\scriptsize{Detection}};
} 
\newcommand{\legende}[3]{
\draw[thick, <->] ($(#1)+(0,-0.20)$) -- ($(#1)+(#2,-0.20)$) node[below=-0.5pt, midway] {\scriptsize{#3}};
}
\newcommand{\semilegende}[3]{
\draw[thick,dashed,<-] ($(#1)+(0,-0.20)$) -- ($(#1)+(1,-0.20)$) node[below=-0.5pt, midway] {};
\draw[thick,->] ($(#1)+(1,-0.20)$) -- ($(#1)+(#2,-0.20)$) node[below=-0.5pt, midway] {\scriptsize{#3}};
}
\newcommand{\arrowtime}[2]{
\draw[thick, color=black,->] (0,#1) -- (#2,#1) node[below=-0.5pt, ] {\scriptsize{Time}};
}
\newcommand{\ttrd}{4} 
\newcommand{\patternCV}[3]{
\draw[very thick, color=red] ($(#1,#3)+(0,-0.8)$) -- ($(#1,#3)+(0,1.8)$);
\draw[very thick, color=red] ($(#2,#3)+(0,-0.8)$) -- ($(#2,#3)+(0,1.8)$);
}
\newtheorem{proposition}{Proposition}
\newcommand{\ema}[1]{\ensuremath{#1}\xspace}
\newcommand{\E}{\mathbb{E}}
\def\PP{\mathbb{P}}
\newcommand{\fail}{\mathcal{F}}
\newcommand{\Xlost}{\ema{T_{lost}}}
\newcommand{\Xrec}{\ema{T_{rec}}}
\newcommand{\Rlost}{\ema{R_{lost}}}
\newcommand{\www}{\ema{w}}
\newcommand{\WWW}{\ema{W}}
\newcommand{\lambdae}{\ema{\lambda_{e}}}
\newcommand{\lambdad}{\ema{\lambda_{d}}}
\newcommand{\mue}{\ema{\mu_{e}}}
\newcommand{\mud}{\ema{\mu_{d}}}
\newcommand{\ccc}{\ema{C}}
\newcommand{\rrr}{\ema{R}}
\newcommand{\ddd}{\ema{D}}
\newcommand{\vvv}{\ema{V}}
\newcommand{\sss}{\ema{\mathbb{S}}}
\newcommand{\lamb}{\mathbb{L}} 
\newcommand{\T}{\ensuremath{T}\xspace} 
\newcommand{\Tmin}{\ensuremath{T_{\min}}\xspace} 
\newcommand{\Waste}{\ema{\textsc{Waste}}}
\newcommand{\Wasteff}{\ema{\textsc{Waste}_{\text{ff}}}}
\newcommand{\Wastefail}{\ema{\textsc{Waste}_{\text{fail}}}}
\newcommand{\tbase}{T_{\text{base}}}
\newcommand{\tff}{T_{\text{ff}}}
\newcommand{\tfin}{T_{\text{final}}}
\newcommand{\muplatform}{\mu}
\newcommand{\W}{\ensuremath{\mathit{Work}}\xspace}
\newcommand{\Pfa}{\ema{\mathbb{P}_{\text{fail}}}}
\newcommand{\Pfb}{\ema{\mathbb{P}_{\text{lat}}}}
\newcommand{\Pfc}{\ema{\mathbb{P}_{\text{irrec}}}}
\newcommand{\Pfd}{\ema{\mathbb{P}_{\text{risk}}}}
\newcommand{\risky}{\ema{\varepsilon}}
\newcommand{\Topt}{\ema{T_{\text{opt}}}}
\title{On the Combination of \\ [-.3cm]
Silent Error Detection and Checkpointing}
\author{Guillaume Aupy$^{1,3}$, Anne Benoit$^{1,3}$, Thomas H\'erault$^{2}$,\\
 Yves Robert$^{1,2,3}$, Fr\'ed\'eric Vivien$^{3,1}$ and Dounia Zaidouni$^{3,1}$\\~\\
 $1.$ LIP,  \'Ecole Normale Sup\'erieure de Lyon, CNRS, France\\
 $2.$ University of Tennessee Knoxville, USA\\
 $3.$ INRIA
 }
\begin{document}
\maketitle

\begin{abstract}
In this paper, we revisit traditional checkpointing and rollback recovery strategies, 
with a focus on silent data corruption errors. Contrarily to fail-stop failures, 
such latent errors cannot be detected immediately, and a mechanism to detect them
must be provided. We consider two models: (i) errors are detected after some delays following a
probability distribution (typically, an Exponential distribution); (ii) errors are detected through some verification mechanism. In both cases, we compute
the optimal period in order to minimize the waste, i.e., the fraction of time
where nodes do not perform useful computations. In practice, only a fixed number of checkpoints 
can be kept in memory, and the first model may lead to an irrecoverable failure.  
In this case, we compute the minimum period
required for an acceptable risk. For the second model, there is no risk of irrecoverable failure, owing to the
verification mechanism, but the corresponding overhead is included in the waste.
Finally, both models are instantiated using realistic scenarios and application/architecture parameters.
\end{abstract}

\section{Introduction}
\label{sec.intro}

For several decades, the High Performance Computing (HPC) community has been aiming
at increasing the computational capabilities of parallel and distributed platforms,
in order to fulfill expectations  arising from many fields of research, such as chemistry, biology,
medicine and aerospace. The core problem of delivering more performance through ever
larger systems is reliability, because of the number of parallel components. Even if each
independent component is quite reliable, the Mean Time Between Failures (MTBF) 
is expected to drop drastically when considering an exascale system~\cite{IESP-Exascale}. 
Failures become a normal part of application executions. 

The de-facto general-purpose error recovery technique in high performance computing is checkpoint 
and rollback recovery.  Such protocols employ checkpoints to periodically save the state of a 
parallel application, so that when an error strikes some process, the application can be
restored into one of its former states. There are several families of checkpointing protocols. 
We assume in this work that each checkpoint forms a consistent recovery line, i.e.,  
when an error is detected, we
can rollback to the last checkpoint and resume execution, after a downtime and a recovery time. 

Most studies assume instantaneous error detection, 
and therefore apply to fail-stop failures, such as for instance the crash of a resource. 
In this work, we revisit checkpoint protocols in the context of {\em latent} errors,
also called silent data corruption. In HPC, it has been shown recently that such errors
are not unusual, and must also be accounted for~\cite{Moody:2010:DME:1884643.1884666}. 
The cause may be for instance soft efforts in L1 cache, or double bit flips. The problem is that
the detection of a latent error is not immediate, because the error is identified only when 
the corrupted data is activated. One must then account for the detection interval required
to detect the error in the error recovery protocol. Indeed, if the last checkpoint saved
an already corrupted state, it may not be possible to recover from the error. Hence the
necessity to keep several checkpoints so that one can rollback to the last {\em correct}
state. 

This work is motivated by a recent paper by 
Lu, Zheng and Chien~\cite{LuZhengChien2013}, who introduce a \emph{multiple checkpointing model} 
to compute the optimal checkpointing period with error detection latency. 
More precisely, Lu, Zheng and Chien~\cite{LuZhengChien2013} deal with the following problem: 
given errors whose inter arrival times $X_{e}$ follow an Exponential probability distribution 
of parameter \lambdae, and given error detection times $X_{d}$ that follow an Exponential probability 
distribution of parameter \lambdad, what is the optimal
checkpointing period $\Topt$ in order to minimize the total execution time? 
The problem is illustrated on Figure~\ref{fig.gopi1}: the error is detected after a (random) time $X_{d}$, 
and one has to rollback up to the last checkpoint that precedes the occurrence of the error.  
Let $k$ be the number of
checkpoints that can be simultaneously kept in memory.
Lu, Zheng and Chien~\cite{LuZhengChien2013} derive a formula for 
the optimal
checkpointing period $\Topt$ in the (simplified) case where $k$ is unbounded  ($k = \infty$), and they propose some numerical simulations to explore the case where $k$ is a fixed constant.

\begin{figure}[t]
\begin{center}
\input{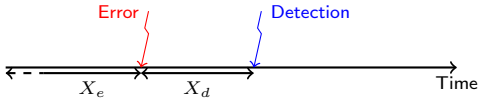}
\end{center}
\vspace{-.3cm}
\caption{Error and detection latency.}
\label{fig.gopi1}
\end{figure}

The first major contribution of this paper is to correct the formula of~\cite{LuZhengChien2013}
when $k$ is unbounded, and to provide an analytical approach when $k$ is a fixed constant. 
The latter approach is a first-order approximation but applies to any probability distribution of errors.

While it is very natural and interesting to consider the latency of error detection, the model 
of~\cite{LuZhengChien2013} suffers from an important limitation: it is not clear how one can 
determine when the error has indeed occurred, and hence to identify the last valid checkpoint, 
unless some verification system is enforced. Another major contribution of this paper is to introduce 
a model coupling verification and checkpointing, and to analytically determine the best balance 
between checkpoints and verifications so as to optimize platform throughput.

The rest of the paper is organized as follows. First we revisit the multiple checkpointing model
of~\cite{LuZhengChien2013} in Section~\ref{sec.chien}; we tackle both the case where all 
checkpoints are kept, and the case with at most $k$ checkpoints. In Section~\ref{sec.ourmodel},
we define and analyze a model coupling checkpoints and verifications. 
Then, we evaluate the various models in Section~\ref{sec.evaluation}, by 
instantiating the models with realistic parameters derived from future exascale platforms. 
Related work is discussed in Section~\ref{sec.related}. 
Finally, we conclude and discuss future research directions in Section~\ref{sec.conclusion}. 

\section{Revisiting the multiple checkpointing model}
\label{sec.chien}

In this section, we revisit the approach of~\cite{LuZhengChien2013}.
We show that their analysis with unbounded memory is incorrect and provide the exact solution 
(Section~\ref{sec.chien-infty}). We also
extend their approach to deal with the case where a given (constant) number of 
checkpoints can be simultaneously kept in memory (Section~\ref{sec.chien-k}).

\subsection{Unlimited checkpoint storage}
\label{sec.chien-infty}

 Let $\ccc$ be the time needed for a checkpoint,  $\rrr$ the time for
recovery, and $\ddd$ the downtime. Although $\rrr$ and $\ccc$ are
a function of the size of the memory footprint of the process, $\ddd$~is a constant 
that represents the unavoidable costs to rejuvenate a
process after an error (e.g., stopping the failed process and restoring
a new one that will load the checkpoint image). We assume that errors
can take place during checkpoint and recovery but not during downtime (otherwise, the downtime
could be considered part of the recovery). 

Let $\mue = \frac{1}{\lambdae}$ be the mean time between errors.
With  no error detection latency and no downtime, well-known formulas for the optimal period 
(useful work plus checkpointing time that minimizes the execution time)
are $\Topt \approx \sqrt{2 \ccc \mue} + \ccc$ (as given by Young~\cite{young74})
and $\Topt \approx \sqrt{2 \ccc (\mue+\rrr)}+ \ccc$ (as given by Daly~\cite{daly04}).
These formulas are first-order approximations and are valid only if $\ccc, \rrr \ll \mue$ 
(in which case they collapse). 

With error detection latency, things are more complicated, even with the assumption
 that one can track the source of the error (and hence identify the last valid checkpoint). 
 Indeed, the amount of rollback will depend upon the sum $X_{e} + X_{d}$.
For Exponential distributions of  $X_{e}$ and $X_{d}$, Lu, Zheng and Chien~\cite{LuZhengChien2013} derive that 
   $\Topt \approx \sqrt{2 \ccc (\mue + \mud) }+ \ccc$, 
  where $\mud = \frac{1}{\lambdad}$ is the mean of error detection times. 
However, although this result may seem intuitive, it is wrong, and we prove that the correct answer
is $\Topt \approx \sqrt{2 \ccc \mue} + \ccc$, even when accounting for the downtime: 
this first-order approximation is the same as Young's formula. We give an intuitive explanation after the proofs
provided in Section~\ref{sec.sub.exp}. Then in Section~\ref{sec.sub.arbi}, we extend this result to
arbitrary laws, but under the additional constraint that $\mud+\ddd+\rrr \ll \mue$.

\subsubsection{Exponential distributions}
\label{sec.sub.exp}

In this section, we assume that $X_{e}$ and $X_{d}$ follow Exponential distributions of mean \mue and \mud respectively.

\begin{proposition}
\label{th.work}
The expected time needed to successfully execute a work of size \www followed by its checkpoint is
$$\E(\T(\www))= e^{\lambdae \rrr} \left(\ddd + \mue + \mud \right) (e^{\lambdae(\www + \ccc)} -1 ).$$
\end{proposition}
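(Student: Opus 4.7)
My plan is to derive $\E(\T(\www))$ by a renewal argument, conditioning on whether an error strikes during the execution of the work-plus-checkpoint segment of length $\www+\ccc$. In the success branch the cost is simply $\www+\ccc$; in the failure branch the cost is the partial computation done before the error, the detection delay, a downtime, a (possibly retried) recovery, and finally a fresh attempt whose expected cost is again $\E(\T(\www))$. Because both $X_e$ and $X_d$ are exponential, the memoryless property makes every conditional expectation tractable and the integrals reduce to elementary ones like $\int_0^{\www+\ccc} t\,\lambdae e^{-\lambdae t}\,dt$.

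First I will handle the recovery in isolation, since it contains its own nested renewal. Let $\rho$ denote the expected duration of a full ``recovery cycle'': one downtime of length \ddd\ (during which the model excludes errors) followed by a recovery attempt of length \rrr\ (during which errors strike at rate \lambdae). Conditioning on the presence or absence of an error in $[0,\rrr]$, and invoking memorylessness so that a fresh $\mud$ accounts for the detection of a recovery-time error and a fresh $\rho$ accounts for the subsequent re-recovery, I obtain a single linear equation in $\rho$ whose solution simplifies to $\rho = e^{\lambdae \rrr}\ddd + (e^{\lambdae \rrr}-1)(\mue+\mud)$.

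Next I will write the main renewal equation for $\E(\T(\www))$:
\[
\E(\T(\www)) = e^{-\lambdae(\www+\ccc)}(\www+\ccc) + \int_0^{\www+\ccc}\lambdae e^{-\lambdae t}\bigl(t+\mud+\rho+\E(\T(\www))\bigr)\,dt.
\]
The elementary identity $\int_0^{\www+\ccc} t\,\lambdae e^{-\lambdae t}\,dt = \mue - e^{-\lambdae(\www+\ccc)}(\www+\ccc+\mue)$ causes the $(\www+\ccc)$ and $\mue$ pieces to collapse, leaving $e^{-\lambdae(\www+\ccc)}\E(\T(\www)) = (1-e^{-\lambdae(\www+\ccc)})(\mue+\mud+\rho)$, hence $\E(\T(\www)) = (e^{\lambdae(\www+\ccc)}-1)(\mue+\mud+\rho)$. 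Substituting the previously computed $\rho$, the terms combine to $\mue+\mud+\rho = e^{\lambdae \rrr}(\ddd+\mue+\mud)$, which is precisely the announced formula.

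The subtlety — and where I expect the main obstacle to lie — is the correct bookkeeping of the recovery phase. A naive treatment that assumes recovery always completes in exactly $\rrr$ units would push $\rrr$ into the main parenthesis and yield the incorrect expression $(e^{\lambdae(\www+\ccc)}-1)(\mue+\mud+\ddd+\rrr)$; producing the factor $e^{\lambdae \rrr}$ outside requires the auxiliary renewal for $\rho$ and the careful separation that errors can occur during recovery but not during downtime. Once $\rho$ is in hand, the remainder is a pair of routine linear equations assembled from integrals of $te^{-\lambdae t}$ and $e^{-\lambdae t}$.
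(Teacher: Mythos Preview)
Your proposal is correct and follows essentially the same renewal argument as the paper: condition on whether an error hits the work-plus-checkpoint segment, treat the recovery phase via its own nested recursion (your $\rho$ is exactly the paper's $\E(\Xrec)$, with the same closed form), and solve the resulting linear equation. The only cosmetic differences are that you write the failure branch as an explicit integral $\int_0^{\www+\ccc}\lambdae e^{-\lambdae t}(\cdots)\,dt$ whereas the paper factors out the conditional expectation $\E(\Xlost)=\frac{1}{\lambdae}-\frac{\www+\ccc}{e^{\lambdae(\www+\ccc)}-1}$, and you compute $\rho$ before the main equation rather than after; the algebra is identical.
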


\begin{proof}
Let $\T(\www)$ be the time needed for successfully executing a work of duration $\www$.
There are two cases: (i) if there is no error during execution and checkpointing, then the time needed is
exactly $\www+ \ccc$; (ii) if there is an error before successfully completing  the work and its checkpoint, then some additional delays 
are incurred. These delays come from three sources: the time spent
computing by the processors before the error occurs,  the time spent before the error is detected, and the time spent for downtime and recovery. 
Regardless, once a successful recovery has been completed,
 there still remain $\www$ units of work to execute. 
Thus, we can write the following recursion: \\
\begin{equation}
  \E(\T(\www)) = e^{-  \lambdae (\www + \ccc)} (\www+ \ccc)
  + (1-e^{- \lambdae (\www + \ccc)} ) \left( \E(\Xlost)+ \E(X_{d})+\E(\Xrec) +\E(\T(\www)) \right) .
  \label{eq.dq0}
\end{equation}

Here, $\Xlost$ denotes the amount of time spent by the processors before the first error, knowing that this
error occurs within the next $\www + \ccc $ units of time. In other terms, it is the
time that is wasted because computation and checkpoint were not both
completed before the error occurred.
The random variable $X_{d}$ represents the time needed for error detection, and its expectation is 
$\E(X_{d})=\mud=\frac{1}{\lambdad}$.
The last variable $\Xrec$ represents the amount of time needed by the
system to perform a recovery. 
Equation~\eqref{eq.dq0} simplifies to:
\begin{equation}
\small \E(\T(\www))= \www+ \ccc+ (e^{\lambdae (\www + \ccc)} -1 ) ( \E(\Xlost)+ \mud+ \E(\Xrec) ).
\label{eq.dq}
\end{equation}
We have 
{\small 
$$\E(\Xlost) =   \int_0^{\infty} x \PP(X=x|X < \www+\ccc) dx \quad \quad\quad \quad$$
$$\quad \quad= \frac{1}{\PP(X < \www+\ccc)}\int_0^{\www+\ccc} x \lambdae e^{-\lambdae x} dx,$$}
 and $\PP(X < \www+\ccc) =  1-e^{-\lambdae (\www+\ccc)}$.  \\ 
Integrating by parts, we derive that 
\begin{equation}
\E(\Xlost) =  \frac{1}{\lambdae} - \frac{\www+\ccc}{e^{\lambdae  (\www + \ccc)} - 1} .
\label{eq.dq2}
\end{equation}
Next, to compute $\E(\Xrec)$, we have a recursive equation quite similar to Equation~\eqref{eq.dq0}
(remember that we assumed that no error can take place during the downtime):
{\small $$\E(\Xrec)  =  e^{-\lambdae \rrr} (\ddd+\rrr) \quad \quad\quad \quad\quad \quad\quad \quad \quad\quad \quad\quad \quad$$
$$+ (1-e^{-\lambdae \rrr})  (\E(\Rlost) + \E(X_{d}) + \ddd+\E(\Xrec)).$$}
Here, $\E(\Rlost)$ is the expected amount of time lost to executing
the recovery before an error happens, knowing that this
error occurs within the next $\rrr$ units of time. Replacing $\www+\ccc$ by $\rrr$ in Equation~\eqref{eq.dq2},
we obtain
$$\E(\Rlost) =  \frac{1}{\lambdae} - \frac{\rrr}{e^{\lambdae  \rrr} - 1}.$$
The expression for  
$\E(\Xrec)$ simplifies to
\begin{equation}
\E(\Xrec) = \ddd e^{ \lambdae \rrr}  +  (e^{\lambdae \rrr} - 1)(\mue + \mud).
\label{eq.dq3}
\end{equation}
Plugging the values of $\E(\Xlost)$ and $\E(\Xrec)$  into Equation~\eqref{eq.dq} leads to the desired value.
\end{proof}

\begin{proposition}
\label{th.work2}
The optimal strategy to execute a work of size \WWW is to divide it into $n$ equal-size
chunks, each followed by a checkpoint, where $n$ is equal either to  $\max(1,\lfloor n^{*} \rfloor)$ or
to $\lceil n^{*} \rceil$. The value of $n^{*}$ is uniquely derived from $y= \frac{\lambdae \WWW}{n^{*}}-1$, where $\lamb(y) = -e^{-\lambdae \ccc -1}$ ($\lamb$, the Lambert function, 
defined as $\lamb(x)e^{\lamb(x)}=x$).
The optimal strategy does not depend on the value of~\mud.
\end{proposition}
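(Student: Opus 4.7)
The plan is to first reduce the problem to a one-variable optimization over the common chunk size, then apply calculus with a substitution that yields the Lambert equation.

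First I would argue that, for any fixed number of chunks $n$, it is optimal to make all chunks equal. Suppose the work is split into pieces of sizes $w_1,\dots,w_n$ with $\sum w_i = W$. By independence of the attempts and Proposition~\ref{th.work}, the expected total time is $e^{\lambdae \rrr}(\ddd+\mue+\mud)\sum_{i=1}^n(e^{\lambdae(w_i+\ccc)}-1)$. Since $x\mapsto e^{\lambdae x}$ is strictly convex, Jensen's inequality (or a Lagrange multiplier computation, which yields $\lambdae e^{\lambdae(w_i+\ccc)}=\lambda$ for all $i$) forces the minimizer to satisfy $w_i=W/n$.

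Next, with $w=W/n$, the expected total time becomes $g(n):=n\,e^{\lambdae \rrr}(\ddd+\mue+\mud)\bigl(e^{\lambdae(W/n+\ccc)}-1\bigr)$. I would relax $n$ to a continuous variable and differentiate: writing $u=\lambdae W/n$, one gets
\begin{equation*}
g'(n) \propto e^{\lambdae(W/n+\ccc)}\bigl(1-u\bigr)-1 = 0,
\end{equation*}
i.e., $(1-u)e^{u+\lambdae \ccc}=1$. Setting $y=u-1=\lambdae W/n-1$, this rewrites as $-y\,e^{y+1+\lambdae \ccc}=1$, that is $y\,e^y=-e^{-\lambdae \ccc -1}$. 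By definition of the Lambert function, $y=\lamb\!\bigl(-e^{-\lambdae \ccc -1}\bigr)$. Since $-e^{-\lambdae \ccc -1}\in(-1/e,0)$, the Lambert function has two real branches; I would pick the branch giving $u<1$ (hence $y<0$) and verify from the second derivative (or from the convex shape of $g$) that this yields the unique minimizer $n^*$ of $g$ on $(0,\infty)$.

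Finally, since $n$ must be a positive integer, I would conclude by a standard monotonicity argument: $g$ is unimodal, so the integer optimum is one of $\max(1,\lfloor n^*\rfloor)$ or $\lceil n^*\rceil$, whichever has smaller $g$-value. The last assertion, that the optimum is independent of $\mud$, is immediate: $\mud$ enters $g(n)$ only through the multiplicative constant $(\ddd+\mue+\mud)$, which plays no role in minimization over $n$. The main technical point is the Lambert substitution and the careful choice of the correct real branch; the convexity argument for equal chunks and the integer rounding are standard.
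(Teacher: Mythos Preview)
Your argument is correct and mirrors the paper's proof almost step for step: linearity of expectation plus convexity of the exponential to force equal chunks, then differentiation of $n\mapsto n\bigl(e^{\lambdae(W/n+\ccc)}-1\bigr)$, the substitution $y=\lambdae W/n-1$ leading to $ye^{y}=-e^{-\lambdae\ccc-1}$, and integer rounding via convexity/unimodality; the $\mud$-independence is likewise dismissed as a multiplicative constant. One small correction on your branch discussion: the criterion ``$u<1$'' does not single out a branch, since both real solutions of $ye^{y}=-e^{-\lambdae\ccc-1}$ satisfy $y<0$; what forces the principal branch $W_{0}$ is the positivity constraint $n^{*}>0$, i.e., $u=\lambdae W/n^{*}>0$, hence $y\in(-1,0)$.
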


\begin{proof}
Using $n$ chunks of size $\www_{i}$ (with $\sum_{i=1}^{n} \www_{i} = \WWW$), by linearity of the expectation, we have $\E(\T(\WWW))= K \sum_{i=1}^{n}(e^{\lambdae(\www_{i} + \ccc)} -1 )$
where $K = e^{\lambdae \rrr} \left(\ddd + \mue + \mud \right) $ is a constant. By convexity, the sum is minimum when all the
$\www_{i}$s are equal (to $\frac{\WWW}{n}$). Now, $\E(\T(\WWW))$ is a convex function of $n$, hence it admits a unique minimum $n^{*}$ such that the derivative is zero:
\begin{equation}
e^{\lambdae(\frac{\WWW}{n^{*}} + \ccc)} (1 -\frac{\lambdae \WWW}{n^{*}}) = 1.
\label{eq.dq4}
\end{equation}

Let $y=\frac{\lambdae \WWW}{n^{*}}-1$, we have $y e^{y} = -e^{-\lambdae \ccc -1}$, hence
$\lamb(y) = -e^{-\lambdae \ccc -1}$.
Then, since we need an integer number of chunks, the optimal strategy  is
to split $\WWW$ into $\max(1,\lfloor n^{*} \rfloor)$ or
$\lceil n^{*} \rceil$ same-size chunks, whichever leads to the smaller value.
As stated, the value of $y$, hence of $n^{*}$, is independent of $\mud$.
\end{proof}

\begin{proposition}
\label{th.work3}
A first-order approximation for the optimal checkpointing period (that minimizes total execution time) is 
$\Topt \approx \sqrt{2 \ccc \mue}+\ccc$.
This value  is identical to Young's formula, and does not depend on the value of \mud.
\end{proposition}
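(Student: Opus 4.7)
The plan is to start from Proposition~\ref{th.work2}, which gives $\Topt = \frac{\WWW}{n^{*}} + \ccc$ with $n^{*}$ characterized by $y e^{y} = -e^{-\lambdae \ccc - 1}$, where $y = \frac{\lambdae \WWW}{n^{*}} - 1$. Since Young's formula is itself only a first-order approximation valid when $\ccc \ll \mue$, I work in the regime $\lambdae \ccc \to 0$. When $\lambdae \ccc = 0$, the defining equation reduces to $y e^{y} = -e^{-1}$, whose unique real solution is $y = -1$. I therefore expand around this base point by writing $y = -1 + \varepsilon$ for small $\varepsilon > 0$ and performing a Taylor expansion of $ye^y$ there.

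The key observation — and the only non-routine step — is that $y = -1$ is precisely the point where the derivative of $y \mapsto y e^{y}$ vanishes (since $(ye^y)' = (1+y)e^y$). Hence the linear term in the expansion disappears and one obtains
\[
y e^{y} = -e^{-1} + \tfrac{e^{-1}}{2}\,\varepsilon^{2} + O(\varepsilon^{3}).
\]
On the right-hand side of the defining equation, a standard expansion gives $-e^{-\lambdae \ccc - 1} = -e^{-1}\bigl(1 - \lambdae \ccc + O((\lambdae \ccc)^{2})\bigr)$. Equating the two expansions and keeping leading-order terms yields $\tfrac{1}{2}\varepsilon^{2} \approx \lambdae \ccc$, so $\varepsilon \approx \sqrt{2 \lambdae \ccc}$. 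This square-root behaviour (as opposed to a linear correction) is the entire reason Young's $\sqrt{2\ccc\mue}$ scaling appears.

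From $\varepsilon = 1 + y = \frac{\lambdae \WWW}{n^{*}}$, I immediately get the optimal chunk size
\[
\frac{\WWW}{n^{*}} \approx \frac{\sqrt{2\lambdae \ccc}}{\lambdae} = \sqrt{\frac{2\ccc}{\lambdae}} = \sqrt{2\ccc\,\mue},
\]
and adding the checkpoint cost $\ccc$ gives $\Topt \approx \sqrt{2\ccc\,\mue} + \ccc$, which is exactly Young's formula. Independence from $\mud$ is immediate since the defining equation for $n^{*}$ in Proposition~\ref{th.work2} contains no occurrence of $\mud$; detection latency and downtime $\ddd$ enter only through the multiplicative constant $K = e^{\lambdae \rrr}(\ddd + \mue + \mud)$ in $\E(\T(\WWW))$, which does not affect the location of the minimum in $n$. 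The main obstacle is purely analytical: recognizing that the expansion is singular (because $y = -1$ is a critical point of $ye^y$) and therefore must be carried to second order in $\varepsilon$ to match the first-order perturbation in $\lambdae \ccc$; once this is observed, the rest is a routine asymptotic calculation.
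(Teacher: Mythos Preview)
Your proof is correct and follows essentially the same approach as the paper. The paper makes the same substitution (writing $z = y+1$, which is your $\varepsilon$), expands $(z-1)e^{z} \approx \tfrac{z^{2}}{2}-1$ and $-e^{-\lambdae\ccc}\approx -1+\lambdae\ccc$, and equates to get $z^{2}\approx 2\lambdae\ccc$; your additional remark that $y=-1$ is a critical point of $y e^{y}$, which is why the expansion must be taken to second order, makes explicit a point the paper leaves implicit.
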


\begin{proof}
We use  Proposition~\ref{th.work2} and Taylor expansions 
when $z = y+1 = \frac{\lambdae \WWW}{n^{*}}$ is small: from $y e^{y} = -e^{-\lambdae \ccc -1}$,
we derive $(z-1) e^{z} = -e^{-\lambdae \ccc}$. We have  $(z-1) e^{z}\approx \frac{z^{2}}{2}-1$,
and $ -e^{-\lambdae \ccc} \approx -1 + \lambdae \ccc$, hence $z^{2} \approx 2 \lambdae \ccc$. The period
is 
$$\Topt = \frac{\WWW}{n^{*}} + \ccc = \frac{z}{\lambdae} +\ccc
 \approx  \sqrt{2 \ccc \mue}+\ccc.$$
\end{proof}

An intuitive explanation of the result is the following: error
detection latency is paid for every error,
and can be viewed as an additional downtime, which has no impact on the optimal period.

\subsubsection{Arbitrary distributions}
\label{sec.sub.arbi}

Here we extend the previous result to arbitrary distribution laws for $X_{e}$ and $X_{d}$ (of mean \mue and \mud respectively):

\begin{proposition}
\label{th.work4}
When $\ccc \ll \mue$ and $\mud + D + R \ll \mue$, a first-order approximation 
for the optimal checkpointing period is 
$\Topt \approx \sqrt{2 \ccc \mue}+\ccc$.
\end{proposition}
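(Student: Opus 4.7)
The plan is to argue by the classical waste analysis of Young and Daly rather than by a recurrence of the form used in Proposition~\ref{th.work}, since we no longer have the memoryless property to collapse the recursion. I would fix a period $T$ (useful work $\www=T-\ccc$ followed by one checkpoint of length $\ccc$) and write the expected duration of one period as the fault-free cost $T$ plus the expected penalty paid whenever an error strikes inside the period:
\begin{equation*}
\E(\T(\www)) \;\approx\; T + \PP(X_e < T)\cdot\bigl(\E(\Xlost\mid X_e<T) + \mud + \ddd + \rrr\bigr),
\end{equation*}
dropping second-order contributions (in particular errors that hit during recovery, whose overall effect is $O((T/\mue)^2)$ under the stated regime).

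Next I would estimate each factor to first order in the small parameter $T/\mue$. Because $\ccc\ll\mue$ implies $T\ll\mue$, a generic renewal-theoretic expansion gives $\PP(X_e<T)\approx T/\mue$ for any law with finite mean. For the conditional lost work, a standard computation of $\E(\Xlost\mid X_e<T)=\int_0^T x f_e(x)dx / F_e(T)$ yields $T/2$ to leading order, again independently of the precise shape of the distribution. Substituting these into the expression above gives
\begin{equation*}
\E(\T(\www)) \;\approx\; T + \frac{T}{\mue}\left(\frac{T}{2} + \mud + \ddd + \rrr\right).
\end{equation*}

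I would then define the waste per unit of useful work, $\textsc{Waste}(T)=\E(\T(\www))/\www - 1$, and expand using $\www=T-\ccc$ with $\ccc\ll T$:
\begin{equation*}
\textsc{Waste}(T) \;\approx\; \frac{\ccc}{T} + \frac{T}{2\mue} + \frac{\mud+\ddd+\rrr}{\mue}.
\end{equation*}
The last term is a constant in $T$, so differentiating with respect to $T$ and setting the derivative to zero yields the balance $\ccc/T^2 = 1/(2\mue)$, hence $T-\ccc = \sqrt{2\ccc\mue}$ and $\Topt\approx\sqrt{2\ccc\mue}+\ccc$, independent of $\mud$, $\ddd$ and $\rrr$. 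This mirrors the intuitive explanation following Proposition~\ref{th.work3}: latency, downtime and recovery behave as a constant additive penalty per error and therefore shift the waste but not its minimizer.

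The main obstacle is the first step of the second paragraph: justifying $\PP(X_e<T)\approx T/\mue$ and $\E(\Xlost\mid X_e<T)\approx T/2$ uniformly for arbitrary distributions. For the Exponential case both came out exactly from Equation~\eqref{eq.dq2}; in general they require mild regularity (e.g.\ a density that is bounded near $0$, or a stationary renewal-process viewpoint where the hazard rate averages to $1/\mue$). The assumption $\mud+\ddd+\rrr\ll\mue$ is precisely what makes the constant term negligible compared to the two terms that actually set the optimum, and the assumption $\ccc\ll\mue$ (equivalently $T\ll\mue$) is what lets us drop all $O((T/\mue)^2)$ contributions, including cascading errors during recovery, without changing the leading-order minimizer.
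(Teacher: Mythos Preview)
Your argument is essentially the paper's own proof: both compute the waste as $\frac{\ccc}{T}+\frac{T}{2\mue}+\text{const}$ (the paper via a whole-job decomposition into $\Wasteff$ and $\Wastefail$, you via a per-period expected cost, but these collapse to the same first-order expression) and then minimize over~$T$. One small slip: from the balance $\ccc/T^2=1/(2\mue)$ you get $T=\sqrt{2\ccc\mue}$, not $T-\ccc=\sqrt{2\ccc\mue}$; the extra $+\ccc$ in the statement is a second-order refinement, and the paper itself remarks that $\sqrt{2\ccc\mue}$ and $\sqrt{2\ccc\mue}+\ccc$ agree up to second-order terms.
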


\begin{proof}
Let $\tbase$ be the base time of the application without any overhead due
to resilience techniques.  First, assume a fault-free execution of the
application: every period of length $\T$, only $\W = \T - \ccc$ units of work are executed, 
hence the time $\tff$ for a
fault-free execution is $\tff = \frac{\T}{\W} \tbase$.
Now, let $\tfin$ denote the expectation of the execution time with errors taken into account.  
In average, errors occur every $\mue$
time-units, and for each of them we lose $\fail$ time-units, so there
are $\frac{\tfin}{\mue}$ errors during the execution. Hence we
derive that 
\begin{equation}
\tfin = \tff  + \frac{\tfin}{\mue} \fail , 
\label{eq.tfinal}
\end{equation}
which we rewrite as
\begin{align}
\small
\big(  1\! -\! \Waste \big)& \tfin  = \tbase,\nonumber \\  
 & \small \text{with }
\Waste = 1 - \big(  1 -  \frac{\fail}{\mue} \big)   \big(  1 -  \frac{\ccc}{\T} \big)  .
\label{eq.waste}
\end{align}
The waste is the fraction of time where nodes do not perform useful
computations. Minimizing execution time is equivalent to minimizing the waste. 
In Equation~\eqref{eq.waste}, we identify the two
sources of overhead: (i) the term $\Wasteff = \frac{\ccc}{\T}$, which is the waste due to checkpointing in a fault-free
execution, by construction of the algorithm; and (ii) the term
$\Wastefail = \frac{\fail}{\mue}$, which is the waste due to
errors striking during execution.  With these notations, we have
\begin{equation}
\small
\Waste = \Wastefail  +\Wasteff  - \Wastefail  \Wasteff .
\label{eq.wasteprod}
\end{equation}
There remains to determine the (expected) value of $\fail$. 
Assuming at most one error per period, we lose 
$\fail = \frac{\T}{2}+\mud+\ddd+\rrr$ per error:
$\frac{T}{2}$ for the average work lost before the error occurs, \mud for detecting the error, 
and $\ddd+\rrr$ for downtime and recovery. Note that the assumption is valid only if
$\mud+\ddd+\rrr \ll \mue$ and $\T \ll \mue$.
Plugging back this value into Equation~\eqref{eq.wasteprod},
we obtain 
\begin{equation}
\small
\Waste(\T) = \frac{\T}{2 \mue} + \frac{\ccc(1-\frac{\ddd+\rrr+\mud}{\mue})}{\T}+\frac{\ddd+\rrr+\mud-\frac{\ccc}{2}}{\mue}
\label{eq.wasteT}
\end{equation}
which is minimal for 
\begin{equation}
\small
\Topt = \sqrt{2 \ccc (\mue - \ddd - \rrr - \mud)} \approx \sqrt{2 \ccc \mue} .
\label{eq.tauopt}
\end{equation}
We point out that this approach based on the waste leads to a different approximation
formula for the optimal period,
but $\Topt = \sqrt{2 \ccc (\mue - \ddd - \rrr - \mud)} \approx \sqrt{2 \ccc \mue} \approx \sqrt{2 \ccc \mue} + \ccc$ up to second-order terms, when $\mue$ is large in front of the other parameters, includig~\mud.
For example, this approach does not allow us to handle the case $\mud=\mue$; in such a case, the optimal
period is known only for Exponential distributions, and is independent of \mud, as proven by
Proposition~\ref{th.work2}.
\end{proof}

To summarize, the exact value of the optimal period is only known for Exponential distributions and is provided  by Proposition~\ref{th.work2},  while  Young's formula can be used as a first-order approximation
for any other distributions.  Indeed, the
optimal period is a trade-off
between the overhead due to checkpointing ($\frac{\ccc}{T}$) and the expected time lost per error
($\frac{T}{2 \mue}$ plus some constant). Up to second-order terms, the waste is minimum when both factors are equal, which leads to Young's formula, and which remains valid regardless of error detection latencies.

\subsection{Saving only $k$ checkpoints}
\label{sec.chien-k}

Lu, Zheng and Chien~\cite{LuZhengChien2013} propose a set of simulations to assess 
the overhead induced when keeping only the last $k$ checkpoints (because of storage limitations). 
In the following, we derive an analytical approach to numerically solve the problem. 
The main difficulty is that when error detection latency is too large, it is impossible 
to recover from a valid checkpoint, and one must resume the execution from scratch. 
We consider this scenario as an \emph{irrecoverable failure}, and we aim at guaranteeing 
that the risk of  irrecoverable failure remains under a user-given threshold.

Assume that a job of total size \WWW is partitioned into $n$ chunks. 
What is the risk of irrecoverable failure during the execution of one chunk of size $\frac{\WWW}{n}$ followed by its checkpoint?
 Let $\T = \frac{\WWW}{n} + \ccc$ be the length of the period. Intuitively, the longer the period, 
 the smaller the probability that
an error that has just been detected took place more than $k$ periods ago, thereby leading to 
an irrecoverable failure because the last valid checkpoint is not one of the $k$ most recent ones. 

Formally, there is an irrecoverable failure if: 
(i) there is an error detected during the period (probability \Pfa),  and 
(ii) the sum of $\Xlost$, the time elapsed since the last checkpoint, and of  $X_{d}$, 
the error detection latency, exceeds  $k\T$ (probability \Pfb). 
The value of $\Pfa = \PP(X_{e}\leq \T)$ is easy to compute from the error distribution law. For instance with an Exponential law, $\Pfa = 1 - e^{- \lambdae \T}$. As for \Pfb, we use an upper bound:
$\Pfb = \PP(\Xlost+X_{d}\geq k\T) \leq \PP(\T+X_{d}\geq k\T) = \PP(X_{d}\geq (k-1)\T) $. The latter
value is easy to compute from the error distribution law. For instance with an Exponential law, 
$\Pfb = e^{- \lambdad (k-1) \T}$. Of course, if there is an error and the error detection latency 
does not exceed  $k\T$ (probability (1-\Pfb)), we have to restart execution and face the same risk as before.
Therefore, the probability of irrecoverable failure \Pfc can be recursively evaluated as
$\Pfc = \Pfa (\Pfb +  (1-\Pfb)  \Pfc)$, hence $\Pfc = \frac{\Pfa \Pfb}{1- \Pfa (1-\Pfb)}$. Now that we have computed \Pfc, the probability of irrecoverable  failure for a single chunk,
we can compute the probability of irrecoverable failure for  $n$ chunks as $\Pfd = 1 - (1 - \Pfc)^{n}$.
In full rigor,
these expressions for \Pfc and \Pfd are valid only for Exponential distributions, because of the memoryless property, but
they are a good approximation for arbitrary laws. Given a prescribed risk threshold $\risky$,
solving numerically the equation $\Pfd \leq \risky$ leads to a
lower bound $\Tmin$ on $\T$. Let $\Topt$ be the optimal period given in Theorem~\ref{th.work3}
for an unbounded number of saved checkpoints. The best strategy is then to use the period $\max(\Tmin, \Topt)$ to minimize the waste while enforcing a risk below threshold. 

In case of irrecoverable  failure, we have to resume execution from the very beginning. The number of re-executions
due to consecutive irrecoverable  failures follows a geometric law of parameter $1-\Pfd$, so that the expected number of executions until success is $\frac{1}{1 - \Pfd}$.
We refer to Section~\ref{sec.evaluation.k} for an example of how to instantiate this model to
compute the best period with a fixed number of checkpoints, under a prescribed risk threshold.

\section{Coupling verification and checkpointing}
\label{sec.ourmodel}

In this section, we move to a more realistic model where silent errors are detected only when some verification mechanism (checksum, error correcting code, coherence tests, etc.) is executed. Our approach is agnostic of the nature of this verification mechanism. We aim at solving the following optimization problem:
given the cost of checkpointing \ccc, downtime \ddd, recovery \rrr, and verification \vvv, what is the optimal 
strategy to minimize the expected waste as a function of the mean time between errors \mue? Depending upon
the relative costs of checkpointing and verifying, we may have more checkpoints than verifications, 
or the other way round. In both cases, we target a periodic pattern that repeats over time. 

Consider first the scenario where the cost of a checkpoint is smaller than the cost of 
a verification: then the periodic pattern will include $k$
checkpoints and $1$ verification, where $k$ is some parameter to determine. 
Figure~\ref{fig.gopi2}(a) provides an illustration with $k=5$. We assume that the 
verification is directly followed by the last checkpoint in the pattern, so as to save results 
just after they have been verified (and before they get corrupted). In this scenario,
the objective is to determine the value of $k$ that leads to the minimum platform waste.
This problem is addressed in Section~\ref{sec.kc1v}.

Because our approach is agnostic of the cost of the verification, we also envision
scenarios where the cost of a checkpoint is higher than the cost of 
a verification. In such a framework, the periodic pattern will include $k$ verifications
and $1$ checkpoint, where $k$ is some parameter to determine. 
See Figure~\ref{fig.gopi2}(b) for an illustration with $k=5$. Again, the 
objective is to determine the value of $k$ that leads to the minimum platform waste.
This problem is addressed in Section~\ref{sec.kv1c}.

We point out that combining verification and checkpointing guarantees that no irrecoverable  failure
will kill the application: the last checkpoint of any period pattern is always correct, because
a verification always takes place right before this checkpoint is taken. If that verification reveals an error, we roll back
until reaching a correct verification point, maybe up to the end of the previous pattern, but never further back,
and re-execute the work.
The amount of roll-back and re-execution depends upon the shape of the pattern,
and we show how to compute it in Sections~\ref{sec.kc1v} and~\ref{sec.kv1c} below.

\begin{figure}[htbp]
\subfloat[$5$ checkpoints for $1$ verification]{
\resizebox{1\linewidth}{!}{
\input{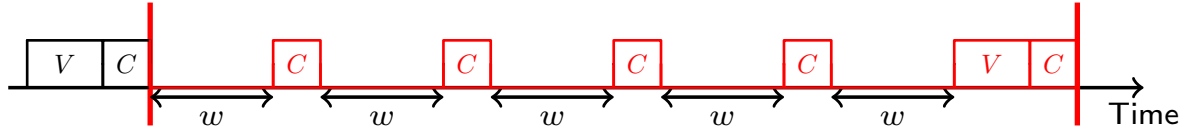}
}}

\subfloat[$5$ verifications for $1$ checkpoint]{
\resizebox{1\linewidth}{!}{
\input{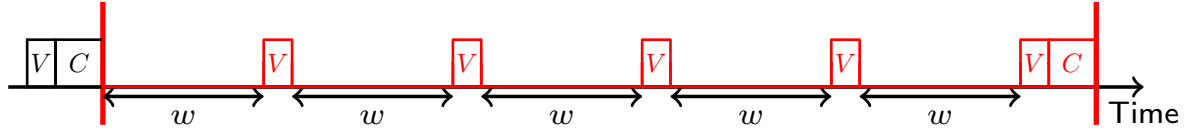}
}}
\caption{Periodic pattern.}
\label{fig.gopi2}
\end{figure}

\vspace*{-0.3cm}
\subsection{With $k$ checkpoints and $1$ verification}
\label{sec.kc1v}

We use the same approach as in the proof of Proposition~\ref{th.work4} and compute 
a first-order approximation of the waste (see Equations~\eqref{eq.waste} and~\eqref{eq.wasteprod}).
We compute the two
sources of overhead: (i) \Wasteff, the waste incurred in a fault-free
execution, by construction of the algorithm, and (ii)
\Wastefail, the waste due to errors striking during execution. 

Let $\sss= k\www + k\ccc + \vvv$ be the length of the periodic pattern.
We easily derive that $\Wasteff = \frac{k\ccc + \vvv}{\sss}$. As for \Wastefail, we still have
 $\Wastefail = \frac{\ddd+\E(\Xlost)}{\mue}$. However, in this context, the time lost because of the error depends upon 
the location of this error within the periodic pattern, so we compute averaged values as follows.
Recall (see Figure~\ref{fig.gopi2}(a)) that checkpoint $k$ is the one preceded by a verification. 
Here is the analysis when an error is detected during the verification that takes place in the pattern:
\begin{compactitem}
\item If the error took place in the (last) segment $k$: we recover from checkpoint $k-1$, 
and verify it; we get a correct result because the error took place
later on. Then we re-execute the last piece of work and redo the
verification. The time that has been lost is $\Xlost(k) = \rrr + \vvv
+ \www + \vvv$. (We assume that there is at most one error
per pattern.)
\item If the error took place in segment $i$, $2 \leq i \leq k-1$:  we recover from checkpoint $k-1$, 
verify it, get a wrong result; we iterate, going back up to checkpoint $i-1$, verify it, and get a correct result because the error took place later on. Then we re-execute $k-i+1$ pieces of work and $k-i$ checkpoints, together with the last verification. We get $\Xlost(i) = (k-i+1) (\rrr + \vvv + \www) + (k-i)\ccc + \vvv$.
\item If the error took place in (first) segment $1$: this is almost the same as above, except that the first recovery at the beginning of the pattern need not be verified, because the verification was made just before
the corresponding checkpoint at the end of the previous pattern. We have the same formula with $i=1$ but with one fewer verification:  $\Xlost(1) = k(\rrr + \www) + (k-1)(\ccc + \vvv) + \vvv$.
\end{compactitem}
Therefore, the formula for \Wastefail writes
\begin{equation}
\small
\Wastefail = \frac{\ddd+\frac{1}{k}\sum_{i=1}^{k} \Xlost(i)}{\mue} , 
\label{eq.waste-fail-kc1v}
\end{equation}
and (after some manipulation using a computer algebra system) 
the formula simplifies to 
\begin{equation}
\small
\Wastefail = \frac{1}{2 k \mue}((\rrr\!+\!\vvv)k^{2} \!+\!(2\ddd\!+\!\rrr\!+\!2\vvv\!+\!\sss\!-\!2\ccc)k\!+\!\sss\!-\!3\vvv)
\label{eq.wastefail-kc1v}
\end{equation}
Using $\Wasteff = \frac{k\ccc + \vvv}{\sss}$ and Equation~\eqref{eq.wasteprod}, we compute
the total waste and derive that $\Waste = a \sss + b +  \frac{c}{\sss}$, where $a$, $b$, and $c$ are
some constants. The optimal value of $\sss$ is $\sss_{opt} = \sqrt{\frac{c}{a}}$, provided that this value
is at least $k\ccc + \vvv$. 
We point out that this formula only is a first-order approximation. We have assumed a single error
per pattern. We have also assumed that errors did not occur during
checkpoints following verifications.
Now, once we have found $\Waste(\sss_{opt})$, the value of the waste obtained
for the optimal period $\sss_{opt}$, we can minimize this quantity as a function of $k$, and 
numerically derive the
optimal value $k_{opt}$ that provides the best value (and hence the best platform usage). 

Due to lack of space, computational details are available in~\cite{webrefmaple}, which is
a Maple sheet that we have to instantiate the model. This Maple sheet is publicly available for
users to experiment with their own parameters. We provide two example scenarios to
illustrate the model in Section~\ref{sec.evaluation.kc1v}.

Finally, note that in order to minimize the waste, one could do a binary search in order to find the last
checkpoint before the fault. Then we can upper-bound $\Xlost(i)$ by $(k-i+1) \www+ \log(k)(\rrr+\vvv) + 
(k-i)\ccc + \vvv$, and Equation~\eqref{eq.wastefail-kc1v} becomes $\Wastefail = \frac{1}{2 k \mue}((\rrr+\vvv)2k \log(k) +(2\ddd+\rrr+2\vvv+\sss-2\ccc)k+\sss-3\vvv) $.

\subsection{With $k$ verifications and $1$ checkpoint}
\label{sec.kv1c}

We use  a similar line of reasoning for this scenario and compute 
a first-order approximation of the waste for the case with $k$ verifications and $1$ checkpoint per 
pattern. The length of the periodic pattern is now $\sss= k \www + k \vvv + \ccc$.
As before, for $1 \leq i \leq k$, let segment $i$ denote the period of work before 
verification $i$, and assume (see Figure~\ref{fig.gopi2}(b)) that verification $k$  is preceded by a
checkpoint.  The analysis is somewhat simpler here.

If an error takes place in segment $i$, $1 \leq i \leq k$, we detect the error during verification~$i$,
we recover from the last checkpoint, and redo the first $i$ segments and verifications: therefore
$\Xlost(i) = \rrr + i(\vvv + \www )$.
The formula for \Wastefail is  the same as in Equation~\eqref{eq.waste-fail-kc1v}
and (after some manipulation) we derive
\begin{equation}
\Wastefail = \frac{1}{2 \mue}\left(\ddd+\rrr+\frac{k+1}{2k}\left(\sss-\ccc\right)\right) .
\label{eq.wastefail-kv1c}
\end{equation}
Using $\Wasteff = \frac{k\vvv + \ccc}{\sss}$ and Equation~\eqref{eq.wasteprod}, we proceed just as
in Section~\ref{sec.kc1v} to compute the optimal value $\sss_{opt}$ of the periodic pattern, and then
the optimal value $k_{opt}$ that minimizes the waste. Details are available within the 
Maple sheet~\cite{webrefmaple}.

\section{Evaluation}
\label{sec.evaluation}

This section provides some examples for instantiating the various
models. We aimed at choosing realistic parameters in the context of
future exascale platforms, but we had to restrict to a limited set of
scenarios, which do not intend to cover the whole spectrum of possible
parameters. The Maple sheet~\cite{webrefmaple} is available to explore
other scenarios.

\subsection{Best period with $k$ checkpoints under a given risk threshold}
\label{sec.evaluation.k}

We first evaluate $\Pfd$, the risk of irrecoverable failure, as defined in Section~\ref{sec.chien-k}.
Figures~\ref{fig:maple:risk1} and~\ref{fig:maple:risk2}
present, for different scenarios, the probability $\Pfd$ as a function
of the checkpointing period $\T$ on the left. On the right, the
figures present the corresponding
waste with $k$ checkpoints and 
in the absence of irrecoverable  failures. This waste can be computed  
following the same reasoning as in Equation~\eqref{eq.wasteT}.
For each figure, the left
diagram represents the risk implied by a given period $\T$, showing
the value $\Topt$ of the optimal checkpoint interval (optimal
with respect to waste minimization and in the absence of irrecoverable  failures, 
see Equation~\eqref{eq.tauopt}) as a blue vertical line. The right
diagram on the figure represents the corresponding waste, highlighting
the trade-off between an increased irrecoverable-failure-free waste and a reduced risk. As stated
in Section~\ref{sec.chien-k}, it does not make sense to select a
value for $T$ lower than $\Topt$, since the waste would be
increased, for an increased risk.

\begin{figure}
\begin{center}
\includegraphics[width=.45\linewidth]{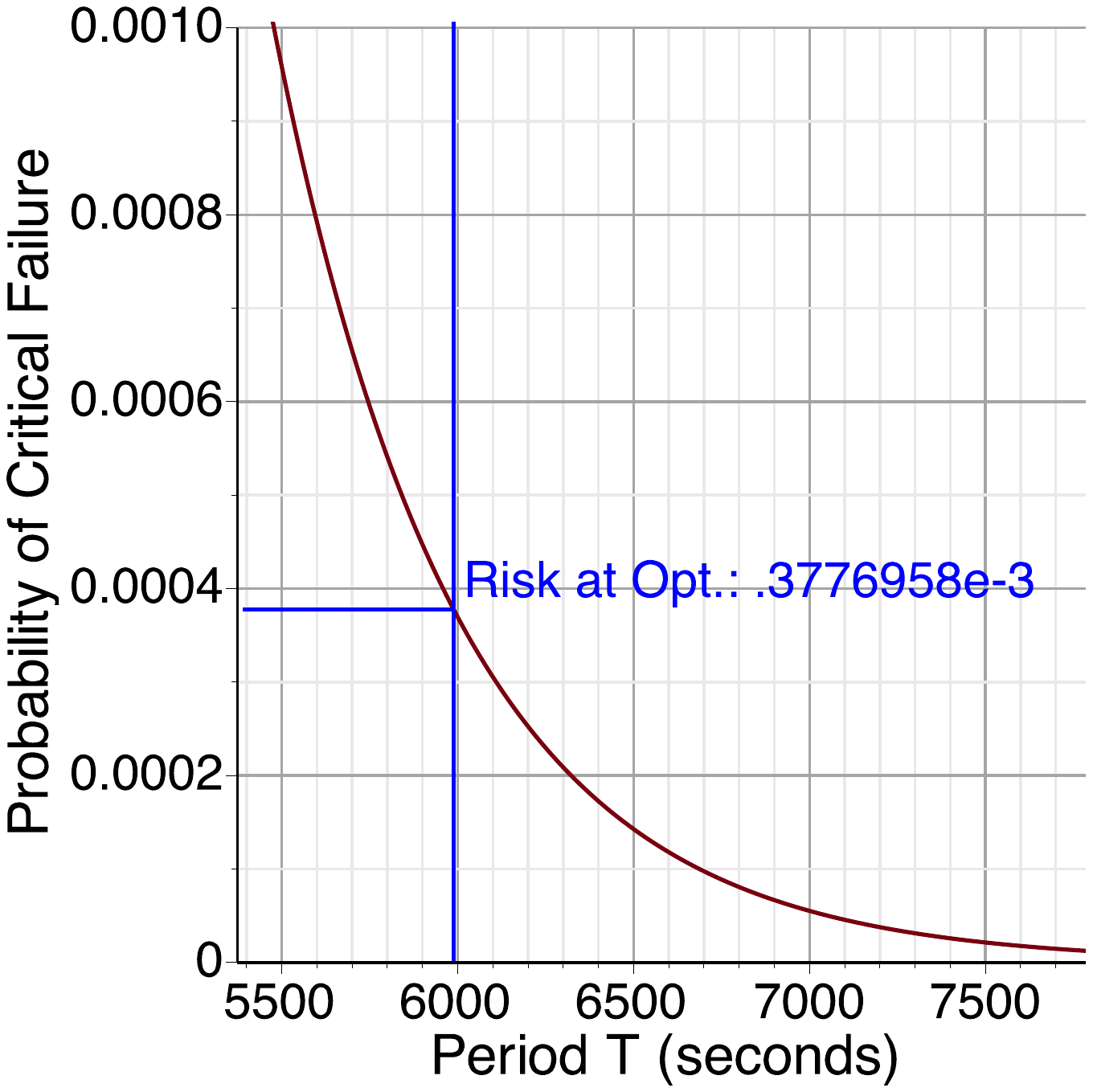}
\includegraphics[width=.45\linewidth]{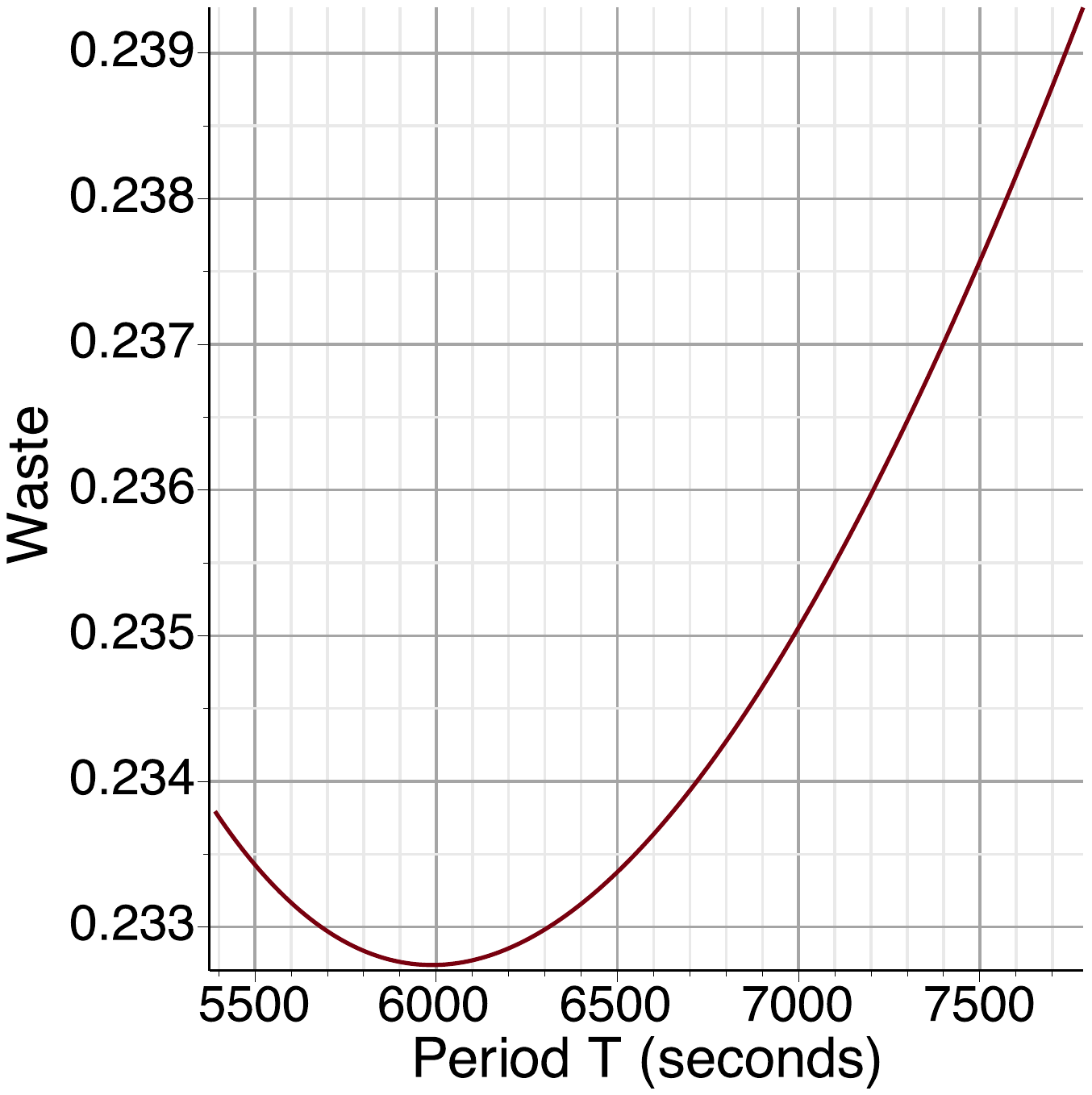}
\caption{Risk of irrecoverable failure as a function of the checkpointing
  period, and corresponding waste. {\footnotesize($k=3$, $\lambdae\!=\!\frac{10^5}{100y}, \lambdad=30\lambdae,
  \www=10d, \ccc=\rrr=600s,$ and $\ddd=0s$.)}}
\label{fig:maple:risk1}
\end{center}
\end{figure}

Figure~\ref{fig:maple:risk1} considers a machine consisting of $10^5$
components, and a component MTBF of 100 years. This component MTBF
corresponds to the optimistic assumption on the reliability of
computers made in the literature~\cite{IESP-toward,IESP-Exascale}.
The platform MTBF $\mue$ is thus $100\times 365\times 24/100,000 \approx 8.76$
hours.  The times to checkpoint and recover (10 min) correspond to
reasonable mean values for systems at this
size~\cite{c178,Ferreira2011}. At this scale, process rejuvenation is
small, and we set the downtime to 0s. For these average values to have a
meaning, we consider a run that is long enough (10 days of work), and
in order to illustrate the trade-off, we take a rather low (but
reasonable) value $k=3$ of intervals, and a mean time error
detection $\mud$ significantly smaller (30 times) than the MTBF $\mue$
itself.

With these parameters, $\Topt$ is around 100 minutes, and the
risk of irrecoverable failure at this checkpoint interval can be evaluated
at $1/2617\approx 38\cdot 10^{-5}$, inducing an irrecoverable-failure-free waste of $23.45\%$. To
reduce the risk to $10^{-4}$, a $T_{\min}$ of $8000$ seconds is sufficient,
increasing the waste by only $0.6\%$. In this case, the benefit of
fixing the period to $\max(\Topt, T_{\min})$ is obvious. Naturally,
keeping a bigger amount of checkpoints (increasing $k$) would also
reduce the risk, at constant waste, if memory can be afforded.

\begin{figure}
\begin{center}
\includegraphics[width=.45\linewidth]{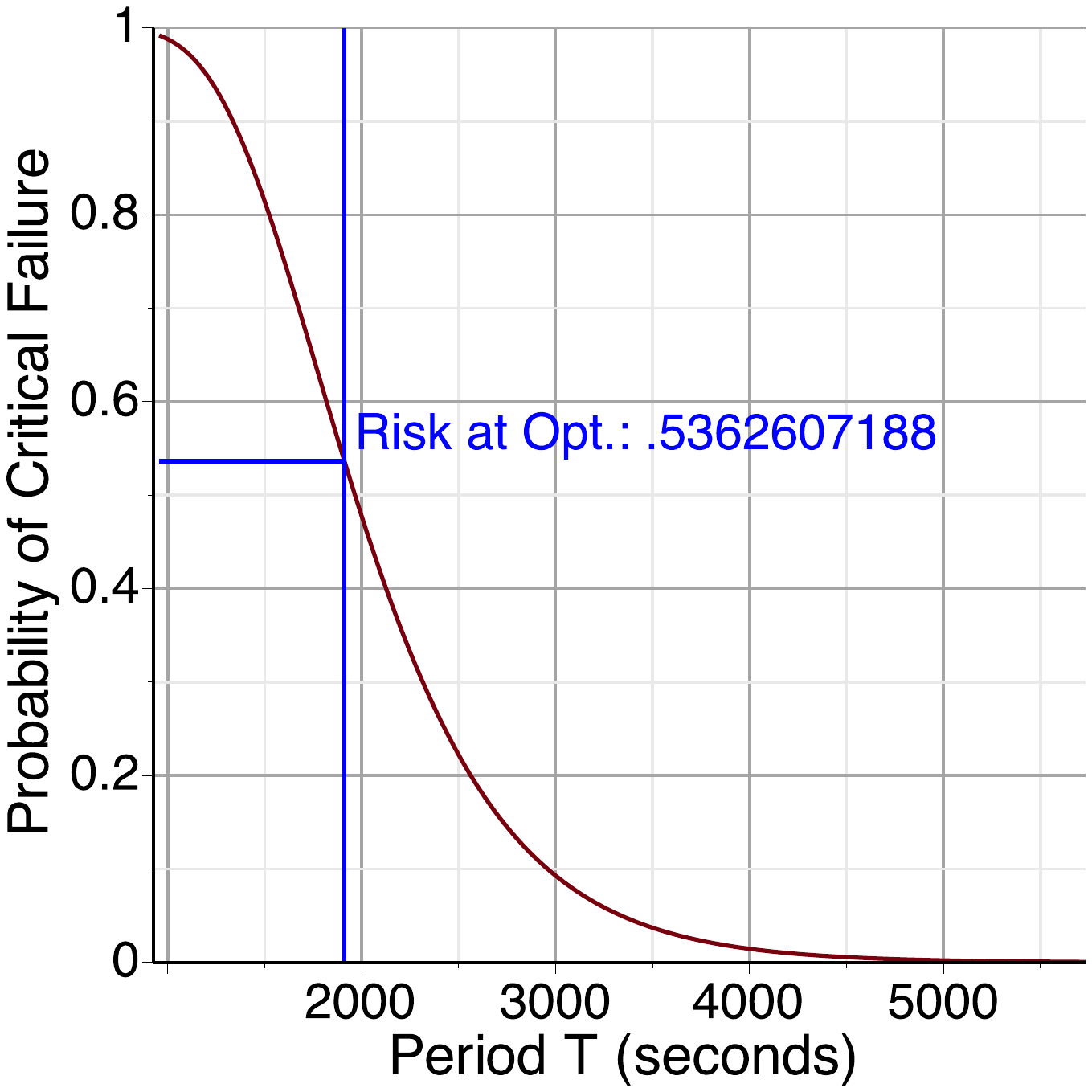}
\includegraphics[width=.45\linewidth]{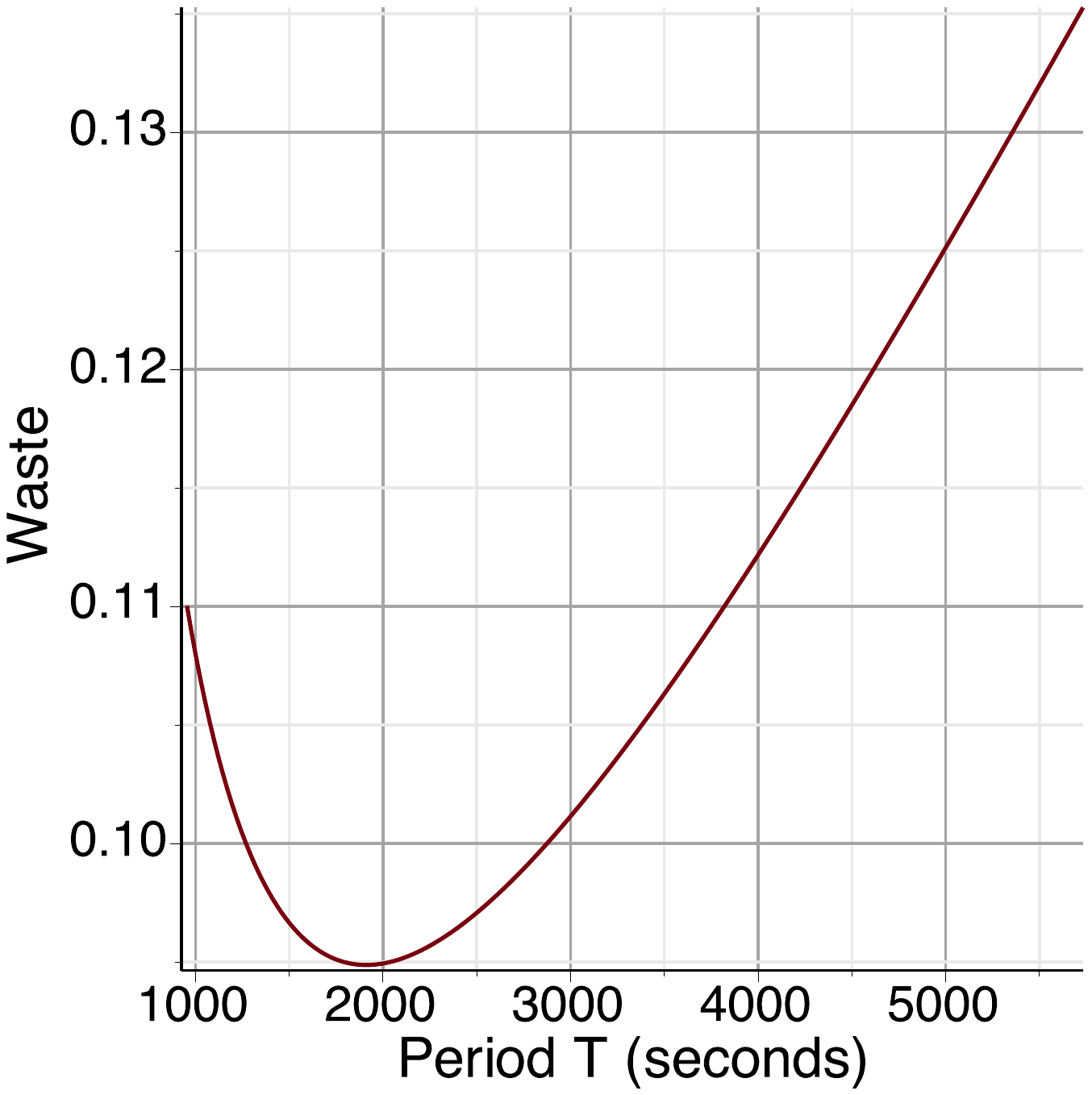}
\caption{Risk of irrecoverable failure as a function of the checkpointing
  period, and corresponding waste. {\footnotesize($k=3$, $\lambdae=\frac{10^5}{100y}, \lambdad=30\lambdae,
  \www=10d, \ccc=\rrr=60s,$ and $\ddd=0s$.)}}
\vspace{-.6cm}
\label{fig:maple:risk2}
\end{center}
\end{figure}

We also consider in Figure~\ref{fig:maple:risk2} a more optimistic scenario where the
checkpointing technology and availability of resources is increased by
a factor 10: the time to checkpoint, recover, and allocate new
computing resources is divided by 10 compared to the previous
scenario. Other parameters are kept similar. One can observe that
$\Topt$ is largely reduced (down to less than 35 minutes between checkpoints), 
as well as the optimal irrecoverable-failure-free waste ($9.55\%$). 
This is unsurprising, and mostly due to the reduction of
failure-free waste implied by the reduction of checkpointing time. But
because the period between checkpoints becomes smaller, while the
latency to detect an error is unchanged ($\mud$ is still 30 times smaller than
 $\mue$), the risk that an error happens at the
interval $i$ but is detected after interval $i+k$ is increased. Thus,
the risk climbs to $1/2$, an unacceptable value. To reduce the risk
to $10^{-4}$ as previously, it becomes necessary to consider a
$T_{min}$ of $6650$ seconds, which implies an irrecoverable-failure-free waste of $15\%$,
significantly higher than the optimal one, which is  below $10\%$, but still
much lower than the $24\%$ when checkpoint and
availability costs are 10 times higher.

\subsection{Periodic pattern with $k$ verifications and $1$ checkpoint}
\label{sec.evaluation.kv1c}

We now focus on the waste induced by the different ways of coupling
periodic verification and checkpointing. We first consider the case of a
periodic pattern with more verifications than checkpoints: every $k$
verifications of the current state of the application, a
checkpoint is taken. The duration of the work interval \sss, between
two verifications in this case, is optimized to minimize the
waste. We consider two scenarios. For each scenario, we represent two diagrams: the left diagram
shows the waste as a function of $k$ for a given verification cost \vvv,
and the right diagram shows the waste as a function of $k$ and \vvv
using a 3D surface representation.

\begin{figure}
\begin{center}
\includegraphics[width=.45\linewidth]{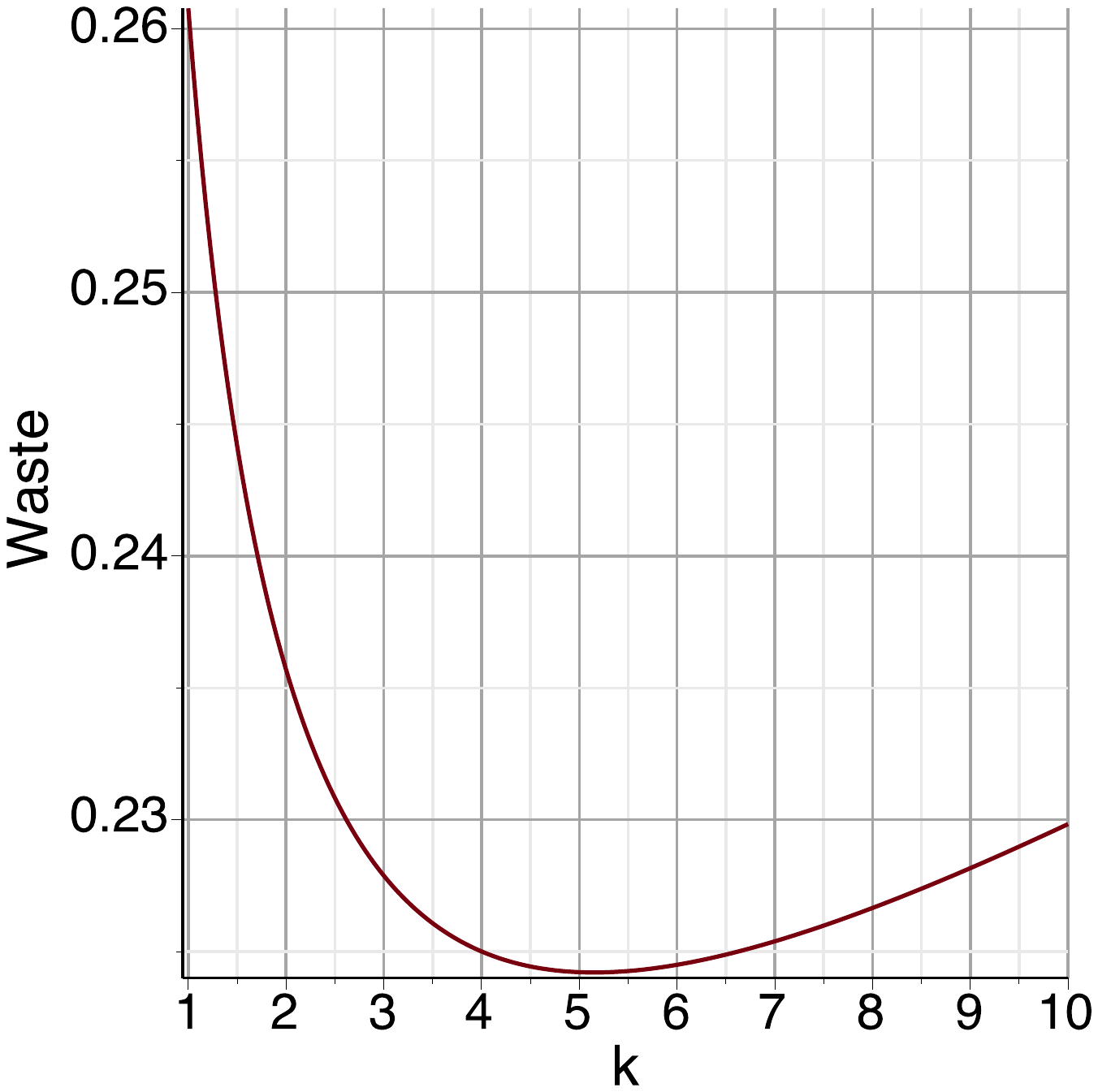}
\includegraphics[width=.45\linewidth]{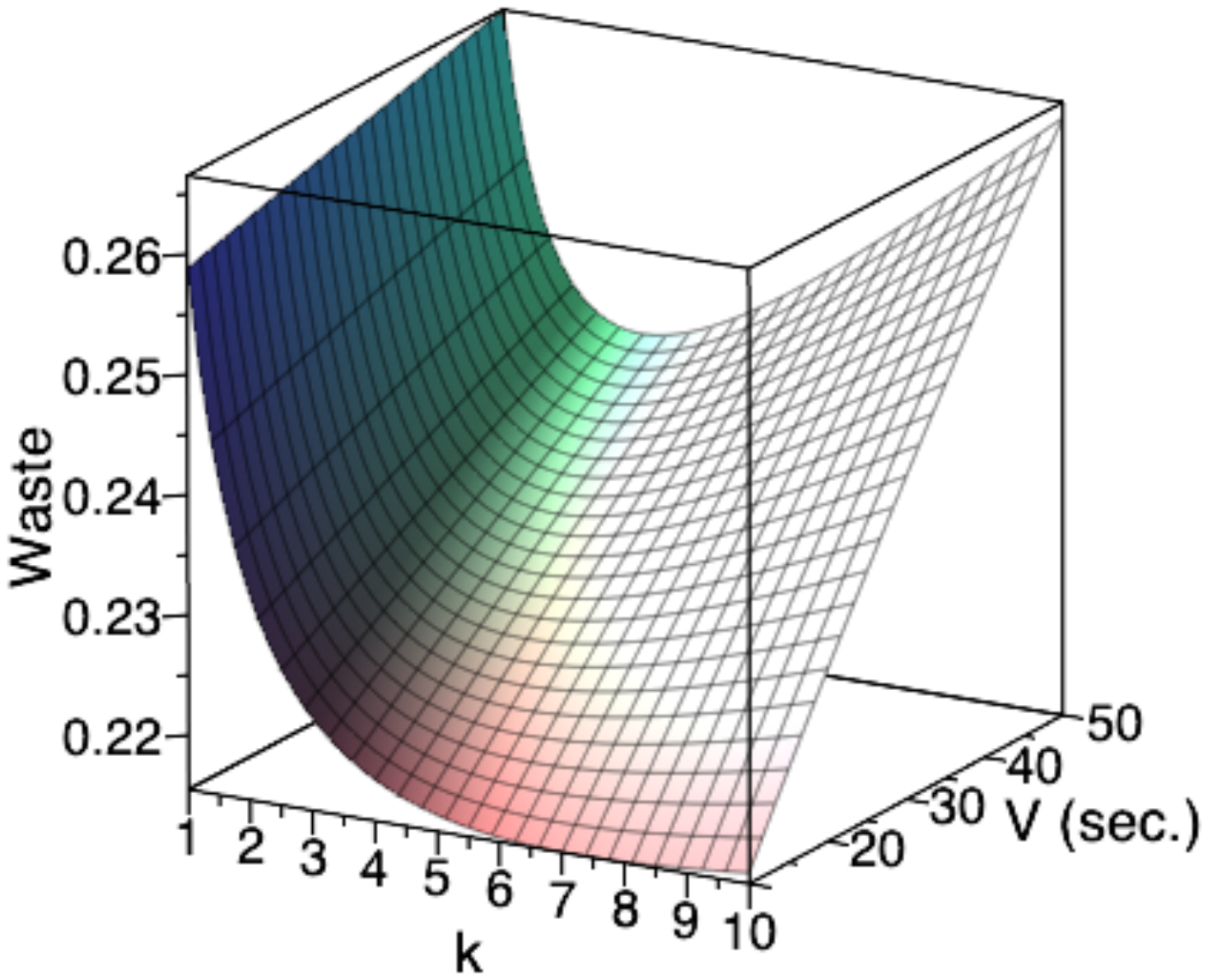}
\caption{Case with $k$ verifications, and one checkpoint per
periodic  pattern. Waste as function of $k$, and potentially of $V$,
using the optimal period.
\footnotesize{($\vvv=20s, \ccc = \rrr = 600s, \ddd = 0s$, and $\muplatform = \frac{10y}{10^5}$.)}
\vspace{-2\baselineskip}
\label{fig:maple:kV1C1}}
\end{center}
\end{figure}

In the first scenario, we consider the same setup as above in
Section~\ref{sec.evaluation.k}.  The waste is computed in its general
form, so we do not need to define the duration of the work. As
represented in Figure~\ref{fig:maple:kV1C1}, for a given verification
cost, the waste can be optimized by making more than one 
verifications. When $k > 1$, there are intermediate verifications that
can enable to detect an error before a periodic pattern (of length
$\sss$) is completed, hence, that can reduce the time lost due to an
error.
However, introducing
too many verifications induces an overhead that eventually dominates the
waste. The 3D surface shows that the waste reduction is significant
when increasing the number of verifications, until the optimal number
is reached. Then, the waste starts to increase again
slowly. Intuitively, the lower the cost for \vvv, the higher the
optimal value for~$k$.

\begin{figure}
\begin{center}
\includegraphics[width=.45\linewidth]{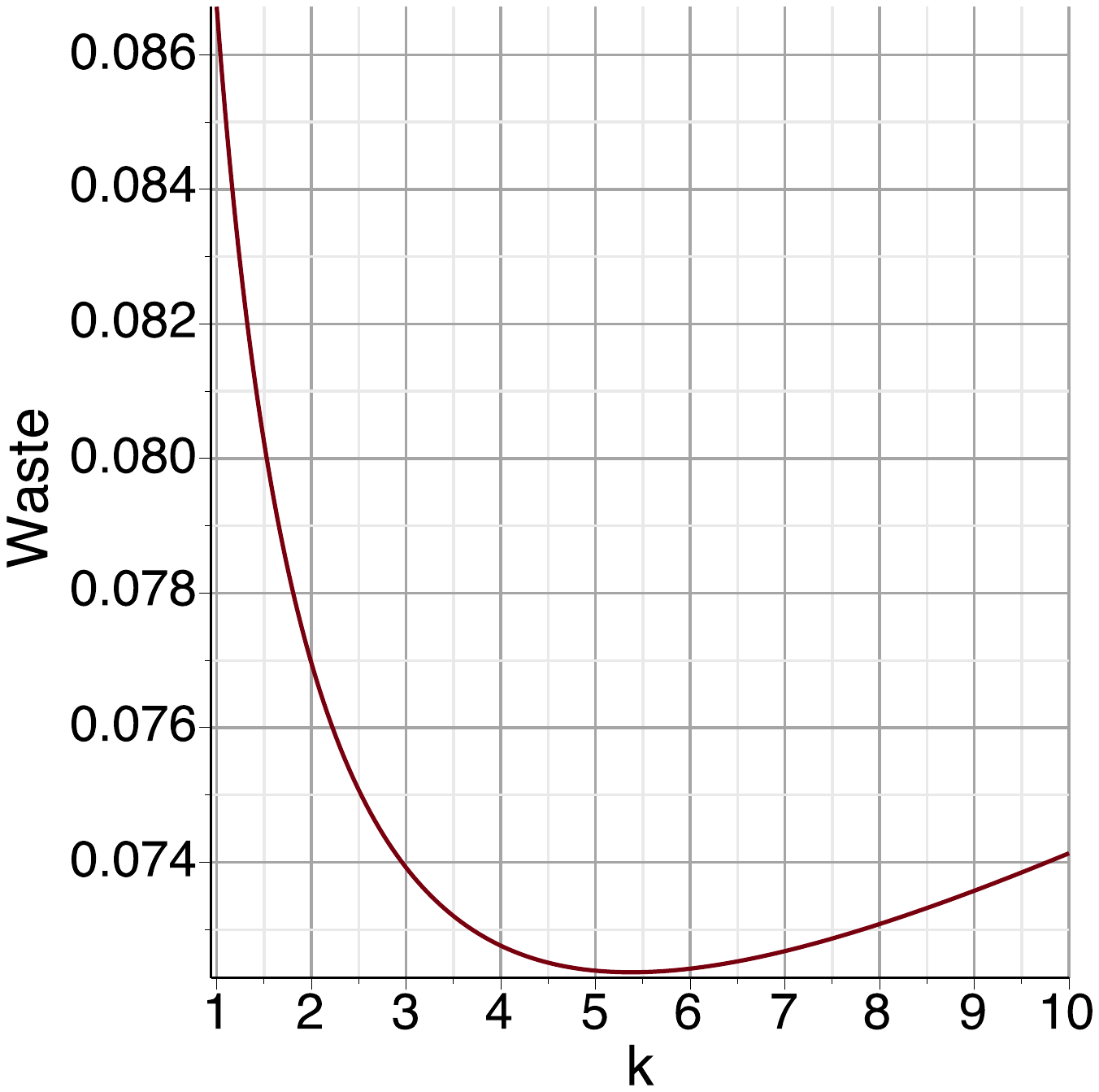}
\includegraphics[width=.45\linewidth]{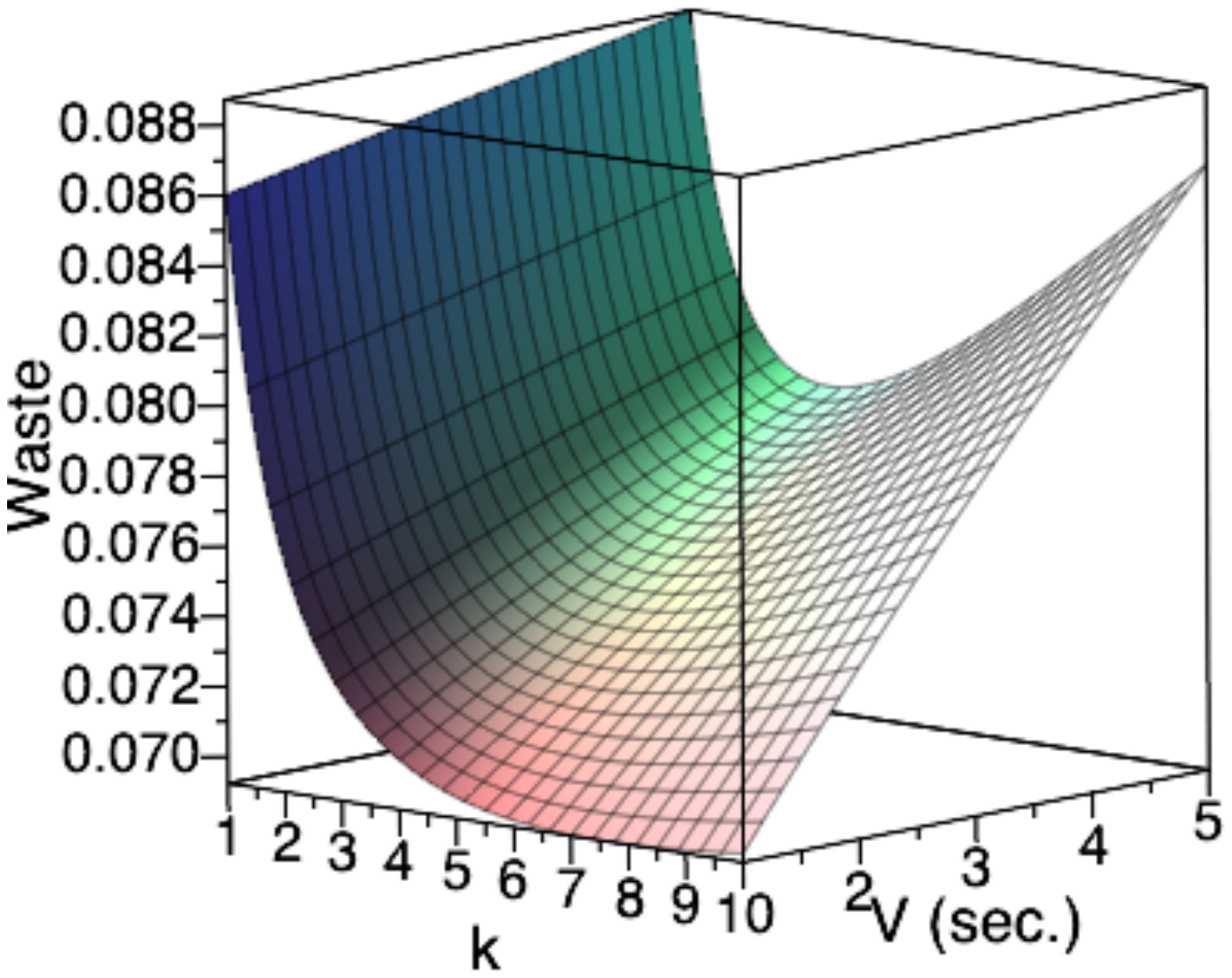}
\caption{Case with $k$ verifications, and one checkpoint per
  periodic  pattern. Waste as function of $k$, and potentially of $V$,
  using the optimal period.
  \footnotesize{($\vvv=2s, \ccc=\rrr=60s,
  \ddd=0s$, and $\muplatform =
  \frac{10y}{10^5}$.)}
\label{fig:maple:kV1C2}\vspace{-1\baselineskip}}
\end{center}
\end{figure}

When considering the second scenario (Figure~\ref{fig:maple:kV1C2}),
with an improved checkpointing and availability setup, the same
conclusions can be reached, with an absolute value of the waste
greatly diminished. Since forced verifications allow to detect the
occurrence of errors at a controllable rate (depending on \sss and
$k$), the risk of non-recoverable errors is nonexistent in this case,
and the waste can be greatly diminished, with very few checkpoints
taken and kept during the execution.

\subsection{Periodic pattern with $k$ checkpoints and $1$ verification}
\label{sec.evaluation.kc1v}

The last set of experiments considers the opposite case of periodic
patterns: checkpoints are taken more often than verifications. Every
$k$ checkpoints, a verification of the data consistency is
done. Intuitively, this could be useful if the cost of verification is
large compared to the cost of checkpointing itself. In that case, when
rolling back after an error is discovered, each checkpoint that was
not validated before is validated at rollback time, potentially
invalidating up to $k-1$ checkpoints.

\begin{figure}
\begin{center}
\includegraphics[width=.45\linewidth]{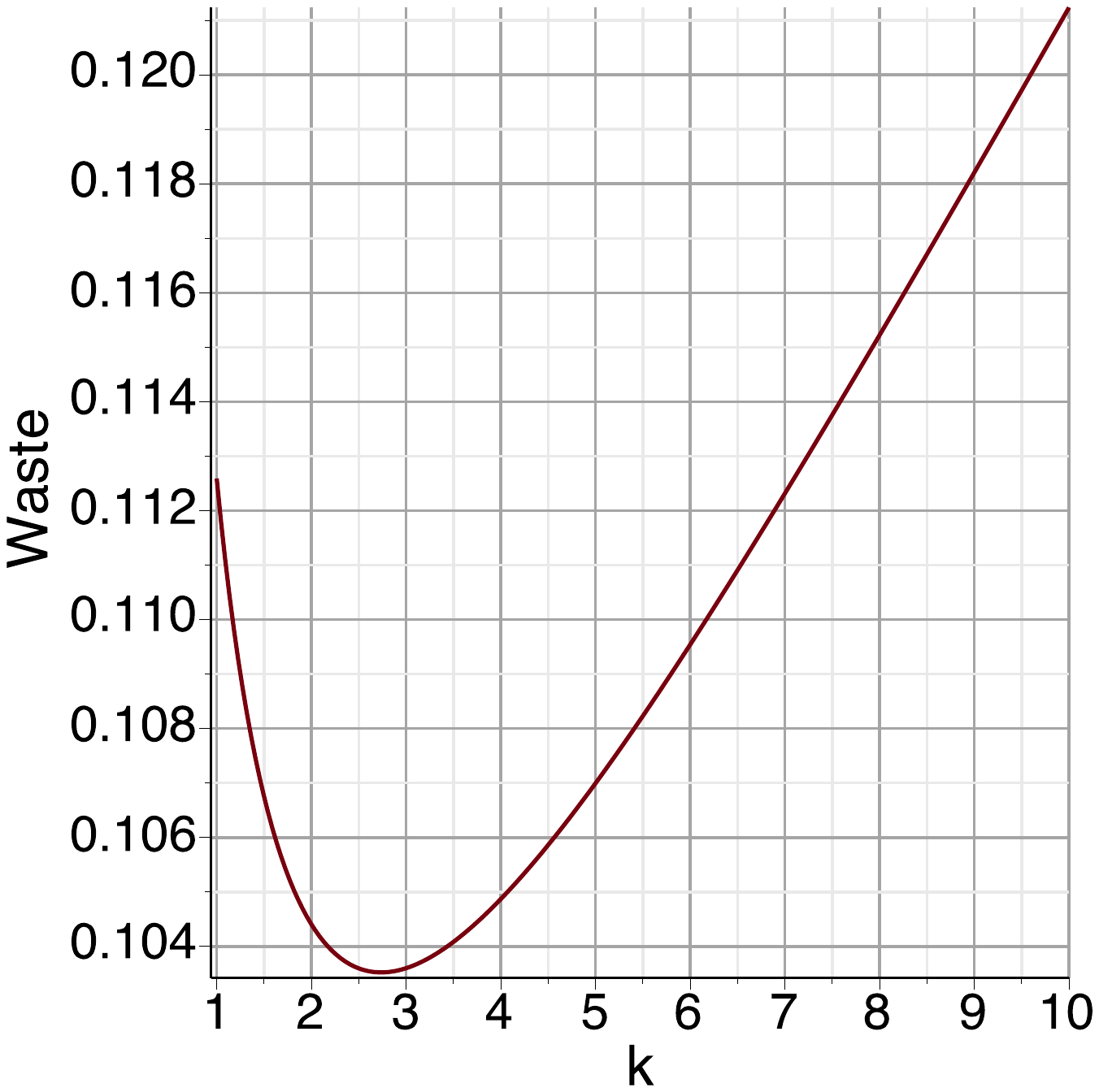}
\includegraphics[width=.45\linewidth]{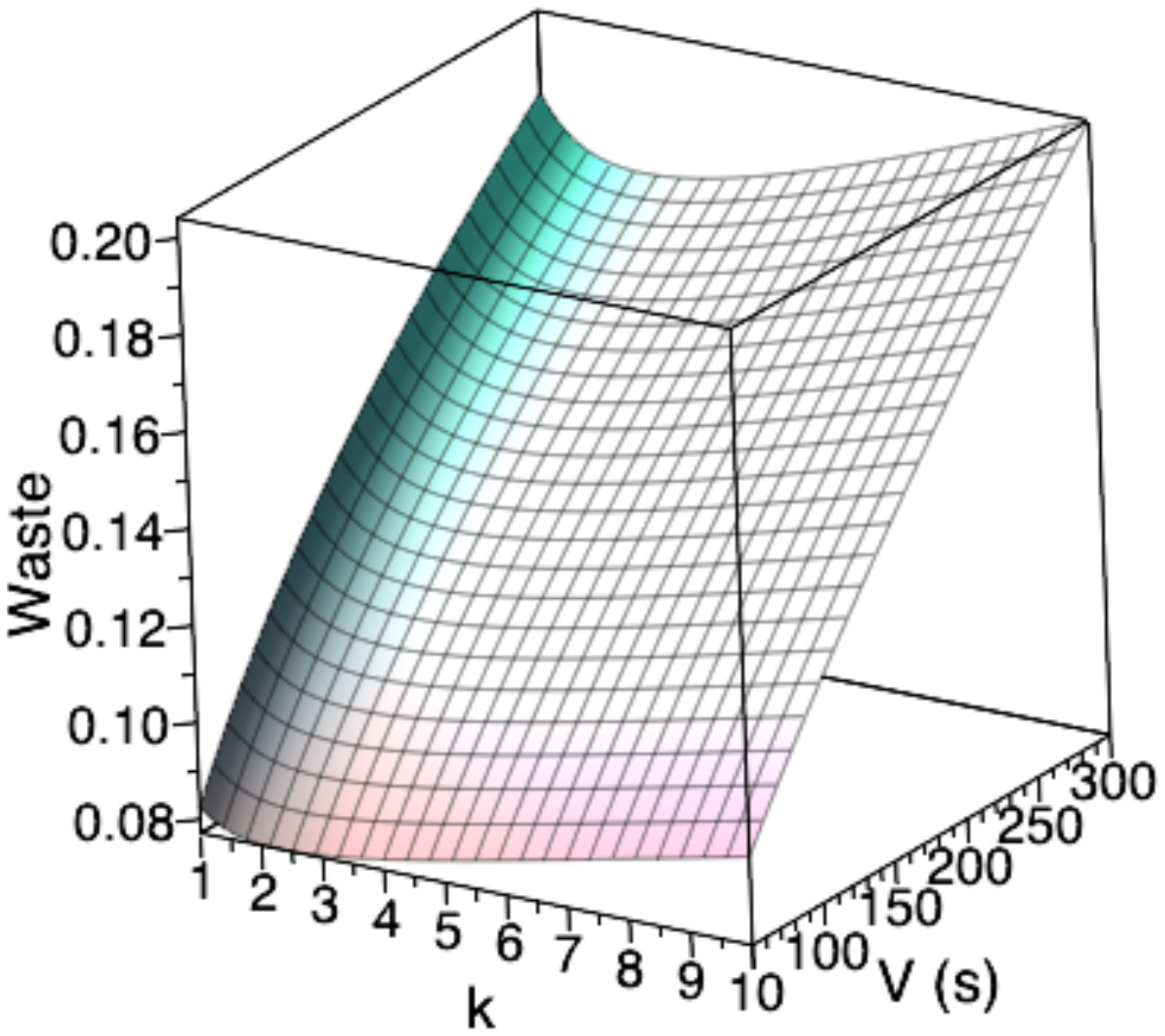}
\caption{Case with $k$ checkpoints, and one verification per periodic
  pattern. Waste as function of $k$, and potentially of $V$, using the
  optimal period.  \footnotesize{($\vvv=100s, \ccc=\rrr=6s, \ddd=0s$,
    and $\muplatform = \frac{10y}{10^5}$.)}
\label{fig:maple:kC1V1}\vspace{-\baselineskip}}
\end{center}
\end{figure}

Because this pattern has potential only when the cost of checkpoint is
much lower than the cost of verification, we considered the case of a
greatly improved checkpoint / availability setup: the checkpoint and
recovery times are only $6$ seconds in
Figure~\ref{fig:maple:kC1V1}. One can observe that in this extreme
case, it can still make sense to consider multiple checkpoints between
two verifications (when $\vvv=100$ seconds, a verification is done
only every 3 checkpoints optimally); however the 3D surface
demonstrates that the waste is still dominated by the cost of the
verification, and little improvement can be achieved by taking the
optimal value for~$k$. The cost of verification must be incurred when rolling
back, and this shows on the overall performance if the verification is
costly.

\begin{figure}
\begin{center}
\includegraphics[width=.45\linewidth]{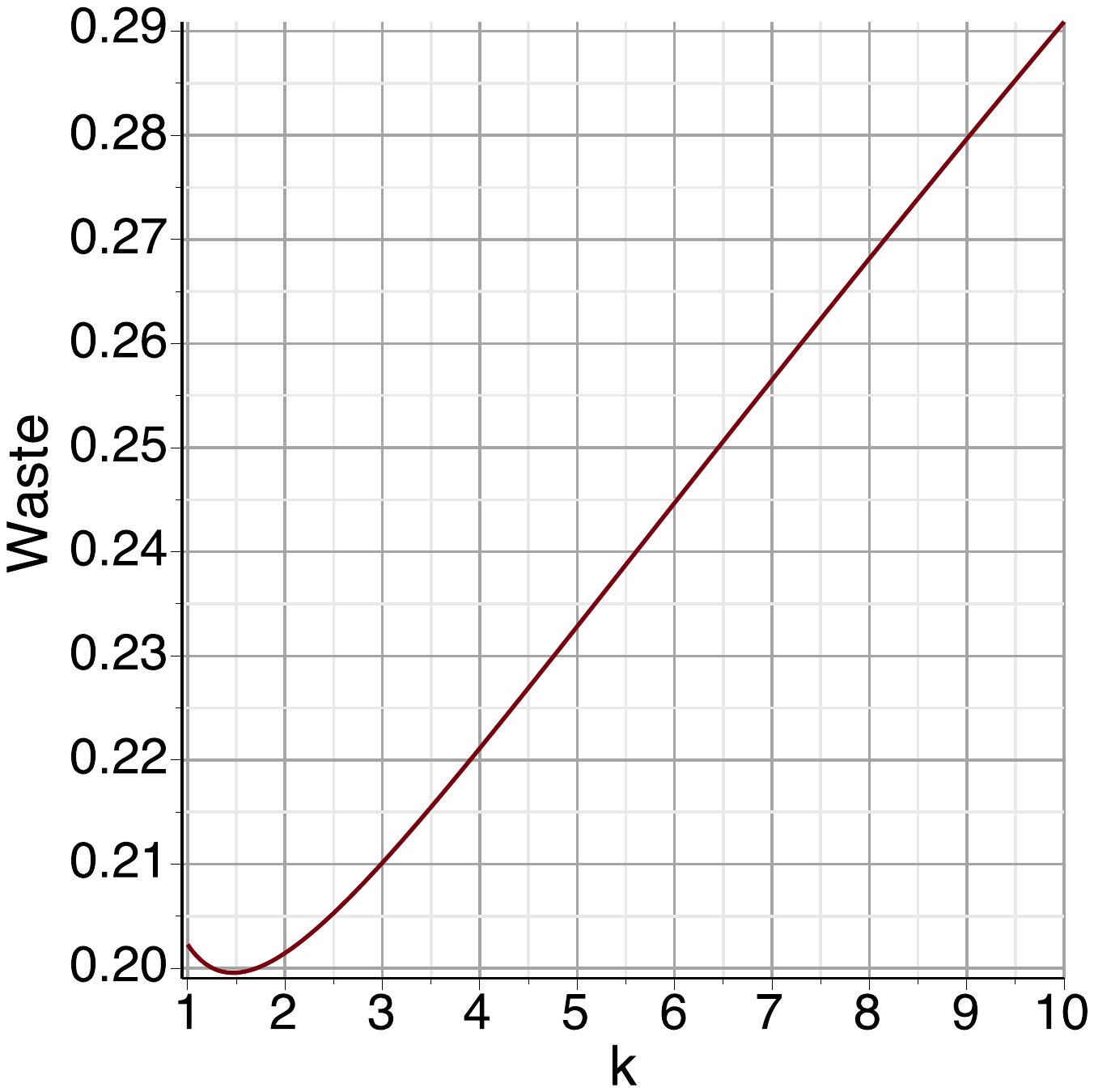}
\includegraphics[width=.45\linewidth]{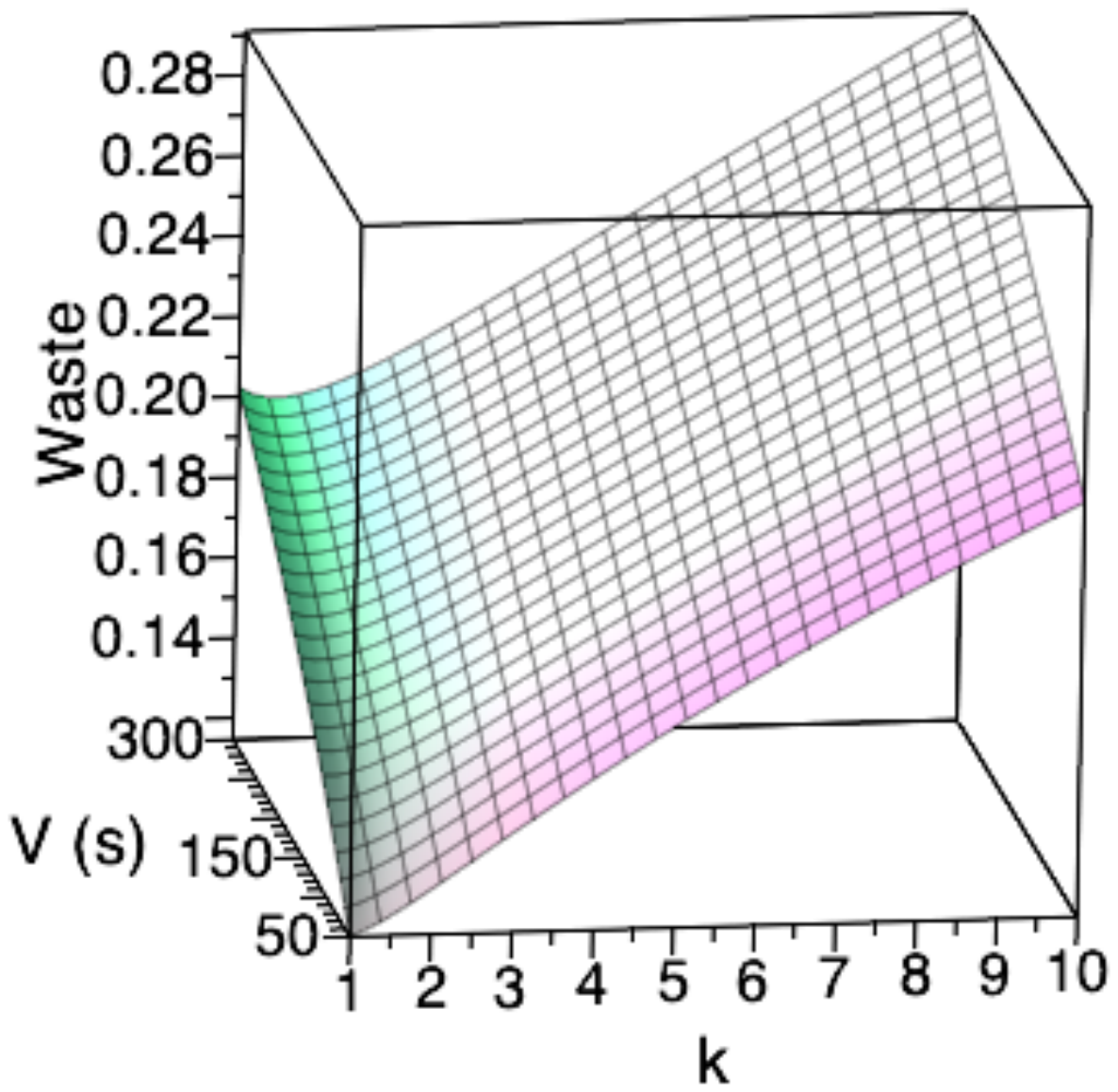}
\caption{Case with $k$ checkpoints, and one verification per periodic
  pattern. Waste as function of $k$, and potentially of $V$, using the
  optimal period.  \footnotesize{($\vvv=300s, \ccc=\rrr=60s, \ddd=0s$,
    and $\muplatform = \frac{10y}{10^5}$.)}
\label{fig:maple:kC1V2}\vspace{-2\baselineskip}}
\end{center}
\end{figure}

This is illustrated even more clearly with Figure~\ref{fig:maple:kC1V2}, 
where the checkpoint costs and machine
availability are set to the second scenario of 
Sections~\ref{sec.evaluation.k} and~\ref{sec.evaluation.kv1c}. 
As soon as the checkpoint cost is not negligible compared to
the verification cost (only 5 times smaller in this case), it is more
efficient to validate every other checkpoint than to validate only after
$k > 2$ checkpoints. The 3D surface shows
that this holds true for rather large values of \vvv.

All the rollback / recovery techniques that we
have evaluated above, using various parameters for the different
costs, and stressing the different approaches to their limits,
expose a waste that remains, in the vast majority of the cases,
largely below $66\%$. This is noticeable, because the traditional
hardware based technique, which relies on triple modular redundancy
and voting~\cite{Lyons1962}, mechanically presents  a waste that is at least
equal to $66\%$ (two-thirds of resources are wasted, and neglecting the cost of voting).

\section{Related work}
\label{sec.related}

As already mentioned, this work is motivated by the recent paper by 
Lu, Zheng and Chien~\cite{LuZhengChien2013}, who introduce a \emph{multiple checkpointing model} 
to compute the optimal checkpointing period with error detection latency. 
We start with a brief overview of traditional checkpointing approaches
before discussing error detection and recovery mechanisms. 

\subsection{Checkpointing}

Traditional (coordinated) checkpointing has been studied for many
years.  The major appeal of the coordinated approach is its
simplicity, because a parallel job using $n$ processors of individual
MTBF $M_{ind}$ can be viewed as a single processor job with MTBF
$\muplatform = \frac{M_{ind}}{n}$. Given the value of $\muplatform$, an approximation of
the optimal checkpointing period can be computed as a function of the
key parameters (downtime $\ddd$, checkpoint time $\ccc$,
and recovery time $\rrr$). The first estimate had been given by
Young~\cite{young74} and later refined by Daly~\cite{daly04}. Both
use a first-order approximation for Exponential failure
distributions; their derivation is similar to the approach in
Equations~\eqref{eq.tfinal} and~\eqref{eq.waste}.  More accurate
formulas for Weibull failure distributions are provided in
~\cite{ling2001variational,ozaki2006distribution,doi:10.1007/978-3-642-14390-8_22}.
The optimal checkpointing period is known only for Exponential failure
distributions~\cite{c178}.  Dynamic programming heuristics for
arbitrary distributions are proposed
in~\cite{toueg1983optimum,10.1109/TC.2012.57,c178}.

The literature proposes different
works~\cite{Plank01processorallocation,SunICPP10,Wang_DSN05,4367962,5289177}
on the modeling of coordinated checkpointing
protocols. In particular,  \cite{SunICPP10} and~\cite{Plank01processorallocation}
focus on the usage of available resources: some may be kept as backup
in order to replace the down ones, and others may be even shutdown in
order to decrease the failure risk or to prevent storage consumption
by saving fewer checkpoint snapshots.

The major drawback of coordinated checkpointing protocols is their
lack of scalability at extreme-scale.  These protocols will lead to
I/O congestion when too many processes are checkpointing at the same
time. Even worse, transferring the whole memory footprint of an HPC
application onto stable storage may well take so much time that a
failure is likely to take place during the transfer!  A few
papers~\cite{5289177,j116} propose a scalability study to assess the
impact of a small MTBF (i.e., of a large number of
processors). The mere conclusion is that checkpoint time should be
dramatically reduced for platform waste to become
acceptable, which motivated the instantiation of optimistic scenarios in
Section~\ref{sec.evaluation}.

All the above approaches maintain a single checkpoint. 
If the checkpoint file includes errors, the application 
faces an irrecoverable failure and must restart from scratch. This is because error 
detection latency is ignored in traditional rollback and recovery schemes. These schemes
assume instantaneous error detection (therefore mainly targeting fail-stop failures)
and are unable to accommodate silent errors.

\subsection{Error detection}

Considerable efforts have been directed at error-checking to reveal latent errors.
Most techniques combine redundancy at various levels, together with a variety
of verification mechanisms. The oldest and most drastic approach is at the hardware level,
where all computations are executed in triplicate, and majority voting is enforced in case
of different results~\cite{Lyons1962}. Error detection approaches include 
memory scrubbing~\cite{Hwang2012}, fault-tolerant 
algorithms~\cite{Bronevetsky2008,Heroux2011,Shantharam2011}, ABFT 
techniques~\cite{Kuang1984,Bosilca2009} and critical MPI message validation~\cite{Fiala2012}.
We refer to Lu, Zheng and Chien~\cite{LuZhengChien2013} for a comprehensive list
of techniques and references. As already mentioned, our work is agnostic of the underlying
error-detection technique and takes the cost of verification as an input parameter to the model
(see Section~\ref{sec.ourmodel}).

\section{Conclusion}
\label{sec.conclusion}

In this paper, we revisit traditional checkpointing  and rollback recovery strategies.
Rather than considering fail-stop failures, we focus on silent data corruption errors.
Such latent errors cannot be neglected anymore in High Performance Computing, in particular
in sensitive and high precision simulations. The core difference with fail-stop failures is
that error detection is not immediate.

We discuss and analyze two models. In the first model, errors are detected after some delay following a
probability distribution (typically, an Exponential distribution). We compute the optimal 
checkpointing period in order to minimize the waste when all checkpoints can be kept in memory,
and we show that this period does not depend on the distribution of detection times. In practice,
only a few checkpoints can be kept in memory, and hence it may happen
that an error was detected after
the last correct checkpoint was removed from storage. We derive a minimum value of the period
 to guarantee, within a risk threshold,  that at least one valid
 checkpoint remains when a latent error is detected. 

A more realistic model assumes that errors are detected through some verification mechanism.
Periodically, one checks whether the current status is meaningful or not, and then eventually detects
a latent error. We discuss both the case where the periodic pattern includes $k$ checkpoints 
for one verification (large cost of verification), and the opposite case with $k$ verifications
for one checkpoint (inexpensive cost for verification). We express a formula for the waste in both cases,
and, from these formulas, we derive the optimal period. 

The various models are instantiated with realistic parameters, and the evaluation results 
clearly corroborate the theoretical analysis. For the first model, with detection times, 
the tradeoff between waste and risk of irrecoverable error clearly appears, hence showing that
a period larger than the one minimizing the irrecoverable-failure-free waste should often be chosen 
to achieve an acceptable risk. 
The advantage of the second model is that there are no irrecoverable failures (within each period,
there is a verification followed by a checkpoint, hence ensuring a valid checkpoint). 
We compute the optimal pattern of checkpoints and verifications per period, as a function of their
respective cost, to minimize the waste. The pattern with more checkpoints than verification turns out
to be usable only when the cost of checkpoint is much lower than the cost of verification, 
and the conclusion is that it is often more efficient to verify the result every other checkpoint. 

Overall, we provide a thorough analysis of checkpointing models for latent errors,
both analyzing the models analytically, and evaluating them through different scenarios. 
A future research direction would be to study more general scenarios of multiple checkpointing,
for instance by keeping not the consecutive $k$ last checkpoints in the first model, but rather 
some older checkpoints to decrease the risk. In the second model, more 
verification/checkpoint combinations could be studied, while we focused on cases 
with an integer number of checkpoints per verification (or the converse). 

\section*{Acknowledgments}
This work was supported in part by the ANR {\em RESCUE} project. 
A.~Benoit and Y.~Robert are with the Institut Universitaire de France.

\bibliographystyle{IEEEtran}
\bibliography{biblio}
\end{document}